\documentclass[lettersize,journal]{IEEEtran}
\usepackage{amsmath,amssymb,mathtools,amsthm,amsfonts,array}
\usepackage{subfigure,graphicx,graphics,color,psfrag}
\usepackage{cite}
\usepackage{balance,caption}
\allowdisplaybreaks

\usepackage{algorithm,algorithmicx, algpseudocode}
\usepackage{accents,bm,url}
\usepackage[english]{babel}
\usepackage{multirow,enumerate,cases}
\usepackage{stfloats,dsfont,soul,fancyhdr,hhline}

\newtheorem{definition}{Definition}
\newtheorem{theorem}{Theorem}
\newtheorem{lemma}{Lemma}
\newtheorem{remark}{Remark}

\newtheorem{proposition}{Proposition}

\begin{document}

\title{Geometric View on Integrated Cascaded Channel of IRS-aided Communications}

\author{Yunli Li and 
    Young Jin Chun, \IEEEmembership{Member, IEEE}% <-this % stops a space
    \thanks{This work was supported in part by the Early Career Scheme under Project 21205021 and the General Research Fund under Project 11211122, both established under the University Grant Committee of the Hong Kong Special Administrative Region, China; in part by the City University of Hong Kong (CityU) under Project 7020083, and Project 7006090.}
    \thanks{Y.~Li and Y.~J.~Chun are with the Department of Electrical Engineering, City University of Hong Kong, Hong Kong, China. (e-mail: yunlili2-c@my.cityu.edu.hk; yjchun@cityu.edu.hk)}}
 
% The paper headers
\markboth{Journal of \LaTeX\ Class Files,~Vol.~14, No.~8, August~2021}%
{Shell \MakeLowercase{\textit{et al.}}: A Sample Article Using IEEEtran.cls for IEEE Journals}

\markboth{}
{Shell \MakeLowercase{\textit{et al.}}: Bare Demo of IEEEtran.cls for IEEE Journals}

\maketitle

\begin{abstract}

The hybrid intelligent reflecting surface (IRS) architecture is a novel technology that leverages the advantages of both passive and active IRS;  the passive IRS offers a large aperture, while the active IRS provides additional power amplification. Prior studies have shown that the optimal performance of IRS-assisted wireless networks is achieved when the passive IRS is deployed near the transceivers and the active IRS is closer to the receiver with decreasing amplification power, assuming that the IRS is located between transceivers and that the height difference between IRS and transceivers is much smaller than the link distance between transceivers. However, most of the prior works on IRS blindly adopted this assumption in the IRS association policy, which essentially becomes a partial selection strategy that offers analytical simplicity at the cost of sub-optimal performance. This limitation motivated us to find the globally optimal deployment strategy for all types of IRS. To this end, we first employ the \emph{geometric models} for integrated path loss distance (known as Cassini oval and Ellipse for product- and sum-distance path loss laws, respectively) and use them to determine the optimal locations of the hybrid IRS. Then, we design a novel \emph{opportunistic association policy} for hybrid IRS based on the integrated path loss model. Furthermore, we validate our proposed methods through simulations and show that they significantly outperform the conventional nearest association policy, especially for hybrid and active IRS.

\end{abstract}

\begin{IEEEkeywords}
	Hybrid IRS, opportunistic association policy, geometric models, integrated path loss distance, cascaded channel.
\end{IEEEkeywords}

\IEEEpeerreviewmaketitle

\section{Introduction}
The Intelligent Reflecting Surface (IRS), also known as Reconfigurable Intelligent Surface (RIS) and Large-Scale Intelligent Surface (LIS), is considered as a critical technology capable of dynamically altering signal paths, extending coverage, and simultaneously enhancing network capacity for future wireless networks \cite{wu2021intelligent, di2020smart, wu2019towards, gong2020toward, elmossallamy2020reconfigurable}. There are three types of IRS architecture: passive IRS, active IRS, and hybrid IRS. 

The passive IRS consists of a phase shift controller and a large surface equipped with numerous reflecting elements. The passive reflecting elements merely reflect signals without additional active processing, which enables high spectral- and energy-efficient communication at a low cost \cite{wu2021intelligent,8989805,lyu2021hybrid}. Despite the performance gain provided by passive IRS, the enhancement can be significantly impeded by the severe two-fold path loss. Specifically, the benefit of passive IRS is limited when the direct link between transceivers is stronger than the IRS-aided links \cite{wu2019intelligent,10458985}.

Active IRS can overcome the severe path loss by simultaneously reflecting and amplifying the received signals \cite{long2021active,zhang2022active,zhi2022active,liu2021active,khoshafa2021active,9896755}. The integration of amplifiers into the active reflecting elements can be achieved through various existing active components, such as current-inverting converters \cite{lonvcar2019ultrathin}, asymmetric current mirrors \cite{6047578} or integrated circuits \cite{kishor2011amplifying}. Nevertheless, unlike passive IRS that operates in a noise-free mode, the aperture of active IRS is constrained by the amplification power and thermal noise generated by active parts. 

To combine the benefits of passive and active IRS, a novel hybrid IRS architecture is proposed in \cite{kang2023active}, which contains two sub-surfaces: one of the sub-surfaces is composed of passive reflecting elements, while the other sub-surface consists of active reflecting elements and additional operation power is required.\footnote{The definition of hybrid IRS is different from \cite{schroeder2022two}, where the active reflecting element are connected an RF chain to support channel estimation and without amplification.} The hybrid IRS leverages the advantages of both passive and active IRS in a complementary manner. Specifically, the passive IRS offers asymptotic beamforming gain with a squared power scaling order of $O(N^{2})$, while the active IRS provides an extra power amplification gain but with a linear power scaling order of $O(N)$ for beamforming due to thermal noise introduced by the active parts, where $N$ is the number of reflecting elements \cite{you2021wireless}.

However, implementing the innovative hybrid architecture presents a difficult issue due to disparate location preferences for passive and active sub-surfaces. Specifically, the passive sub-surface is inclined towards being situated in close proximity to the transceivers, while the active sub-surface necessitates placement at a certain distance away from the transmitter to ensure an acceptable amplification factor \cite{wu2019intelligent,you2021wireless}. Furthermore, active sub-surface is supposed to be deployed closer to the receiver with decreasing amplification power \cite{you2021wireless}. Consequently, finding the optimal deployment strategy is one of the most essential design challenges for the hybrid IRS-assisted wireless networks to achieve the full potential of both passive and active sub-surfaces.

The channel properties are crucial for analyzing the end-to-end performance of IRS-aided communications. As summarized in \cite{huang2022reconfigurable}, there are numerous small-scale multipath fading models, including the phase shift model, transmission-mode model, physical-based channel model, beamspace channel model, geometry-based stochastic model, keyhole channel model, and machine learning-based channel model. Rician fading model, especially, is popular in IRS-assisted communication due to its line-of-sight assumption \cite{al2021performance,elhattab2022ris}. In \cite{alayasra2021irs}, the authors proposed a beamspace model based on extended Saleh-Valenzuela (SV) channel, categorizing the scatters and paths of multipath components. Mixture Gamma distribution-based, generalized fading model was introduced in \cite{10458985} to ease the analysis of IRS-aided networks.
		
To the best of our knowledge, the geometric modeling of the integrated path loss distance remains an open problem, although many studies have modeled small-scale fading in the cascaded channel. To unlock the full potential of hybrid IRS, it is crucial to comprehend the geometric arrangement of node locations for optimal IRS placement. Furthermore, we need to develop an opportunistic association policy for IRS, which associates IRS based on the best end-to-end performance \cite{liu2003framework}.

\subsection{Geometric Model}

The geometric model offers a realistic representation of wireless networks and provides a spatial understanding of the given wireless network. The location of nodes affects signal strength, interference, and ultimately the performance of the network. The geometric model can accurately portray the physical distances and relative positions of nodes, allowing for better predictability of system behavior \cite{tse2005fundamentals}. 

Comprehending the geometric nature of a network enables the anticipation of signal propagation, and identification of interference patterns and potential bottlenecks. This understanding proves invaluable in optimizing the deployment and node density to enhance overall network performance \cite{tse2005fundamentals}. For example, in machine-to-machine (M2M) communications, where the fixed locations of user equipment (UE) allow for the strategic design of the deployment of IRSs based on defined objective functions, such as maximizing IRS utility or minimizing interference among UEs. The geometric model further plays a pivotal role in informing decisions regarding the integration of a new node. For instance, once the system's deployment is established, UEs can access a look-up table of potential average performance for each IRS from the base station (BS) upon joining the cell and reporting their locations. Subsequently, UE can judiciously select the serving IRS based on the information provided in the table.

Furthermore, the geometric model exhibits adaptability across diverse wireless scenarios, encompassing both mobile and static networks, while seamlessly accommodating variations in network size and topology. Notably, in IRS-aided networks, geometric model is of critical importance in enhancing the analytical framework by precisely capturing the statistics of the integrated links based on practical opportunistic association policies, as opposed to relying on partial selection over the nearest association. Moreover, the geometric model serves as a robust foundation for stochastic geometry-based system-level analysis. This is attributed to their ability to model the distribution of connection distances among nodes based on their inherent geometric relationships \cite{andrews2016primer,andrews2011tractable,haenggi2012stochastic}.

The ellipse model, extensively explored for sum-distance, has gained much attention in localization problems. However, to the best of the authors' knowledge, this is the first exploration of the Cassini oval model for product-distance in the realm of wireless communications.

\subsection{Association Policy: Nearest and Opportunistic}

For IRS-aided communication, accurate statistical modeling of the integrated cascaded channel remains an open problem due to the lack of exploration in the integrated path loss distance. Most existing literature relies on the nearest association policy, which redirects the focus to individual link by trading accuracy for tractability \cite{8989805,lyu2021hybrid,10458985}. 

The nearest association policy entails a partial selection mechanism for passive IRS grounded by a simple setup with the product-distance path loss law, where the location of the IRS is confined to the segment between the BS and UE with limited height \cite{you2022deploy,wu2021intelligent}. Additionally, this association relies on a widely adopted common assumption for analytical simplification, where the link distance between the associated passive IRS and the serving BS of the typical UE, referred to as the BS$\rightarrow$IRS link, equals the link distance between the BS and UE, denoted as BS$\rightarrow$UE link. This simplification directs the analysis toward the link distance between the IRS and UE, denoted as IRS$\rightarrow$UE link, rather than considering the integrated cascaded BS$\rightarrow$IRS$\rightarrow$UE channel. 

% why unsuitable for passive
However, this assumption proves impractical for several reasons. First, the approximated nearest association is effective solely for the product-distance path loss law. Yet, diverse propagation scenarios necessitate distinct path loss laws, as corroborated by \cite{tang2020wireless}. For example, the product-distance path loss law is appropriate for far-field and near-field beamforming cases, whereas the sum-distance path loss law is suitable for near-field broadcasting cases. Regrettably, the nearest association policy is ill-suited for the sum-distance path loss law. Second, in IRS-aided wireless networks, the channel condition of \emph{the integrated cascaded channel}, i.e., BS$\rightarrow$IRS$\rightarrow$UE, directly governs end-to-end communication quality, instead of the individual links, i.e., BS$\rightarrow$IRS or IRS$\rightarrow$UE. Specifically, even if one of the single links attains sufficient quality, poor performance may persist if the other link is obstructed.

Moreover, the nearest association policy fails to realize full potential of the active and hybrid IRS. As illustrated in \cite{you2021wireless}, the active IRS should be deployed closer to the receiver with decreasing amplification power to maximize the achievable rate. In addition, for hybrid IRS, the deployment preferences of the two sub-surfaces, namely passive and active, along the segment between the BS and UE are different. As such, within this constrained deployment framework along the segment between BS and UE, a performance balance arises between these two sub-surfaces regarding to IRS locations. This conflict deployment preference between the two sub-surfaces has motivated us to investigate the optimal locations and opportunistic association of hybrid IRS across the entire area, which requires comprehensive understanding of the integrated channel.

The integrated cascaded channel imposes elevated demands on the accurate modeling of communication, necessitating a delicate equilibrium between analytical complexity and accuracy. Our exploration leverages the geometric properties of the integrated path loss distance within the IRS-aided cascaded channel. While previous work, as outlined in \cite{fang2022optimum}, has delved into the optimal selection of passive IRS in a single-cell network using stochastic geometry tools, further research into the network’s geometric nature is imperative for studying the optimal locations of the hybrid IRS. Specifically, we need to investigate the cascaded channel’s end-to-end performance from a geometric perspective, which is crucial for developing an opportunistic association policy and a deployment strategy tailored for hybrid IRS.

\subsection{Contributions}

Driven by the aforementioned concerns, our central objective is to address the \emph{geometric modeling} intricacies inherent in the equal-gain integrated path loss distance within the IRS-assisted cascaded channel. Subsequently, we formulate an optimization problem to ascertain the optimal locations for the hybrid IRS and put forth an \emph{opportunistic association policy} tailored for hybrid IRS. The main contributions of this paper may be summarized as follows:
 
\begin{enumerate}
	\item We examine two geometric models of the integrated path loss distance for an IRS-aided cascaded channel, referred to as Cassini oval and Ellipse for product- and sum-distance path loss laws, respectively. Building upon the geometric models, we explore the statistical characteristics of the integrated path loss distance, such as equal-gain trajectory, cumulative distribution function (CDF) and probability density function (PDF).  
	\item Based on the geometric nature of node locations, we demonstrate that the optimal locations for the hybrid IRS can be determined by the integrated path loss distance of the cascaded channel (BS$\rightarrow$IRS$\rightarrow$UE), system parameters, and the link distance between the BS and UE. Moreover, the outcomes for both passive and active IRS can be obtained as special instances of the hybrid IRS.
	\item We investigate potential implementations of the proposed geometric models in IRS-aided networks and compare the performance of opportunistic and nearest association policies for IRS. It is noted that the nearest association policy is ineffective for hybrid/active IRS, while the opportunistic association policy results in significant performance improvement across all types of IRS.
\end{enumerate}

\subsection{Organizations}
The remaining paper is organized as follows. Section \ref{sec:sysM} introduces the system model, Section \ref{sec:geo} presents the geometric models and derives statistical characteristics for the equal-gain integrated path loss distance of the cascaded channel. Section \ref{sec:assoc} displays the channel statistics in IRS-aided communication, establishes the hybrid IRS's optimal locations and proposes an opportunistic association policy. Section \ref{sec:specialCase} analyzes special cases to evaluate our proposed geometric model. Section \ref{sec:numericalR} provides Monte-Carlo simulations to verify our analysis and Section \ref{sec:conclusion} concludes the paper.

\section{System Model}\label{sec:sysM}

\begin{figure}[t!]
	\centering
	\includegraphics[width=\linewidth]{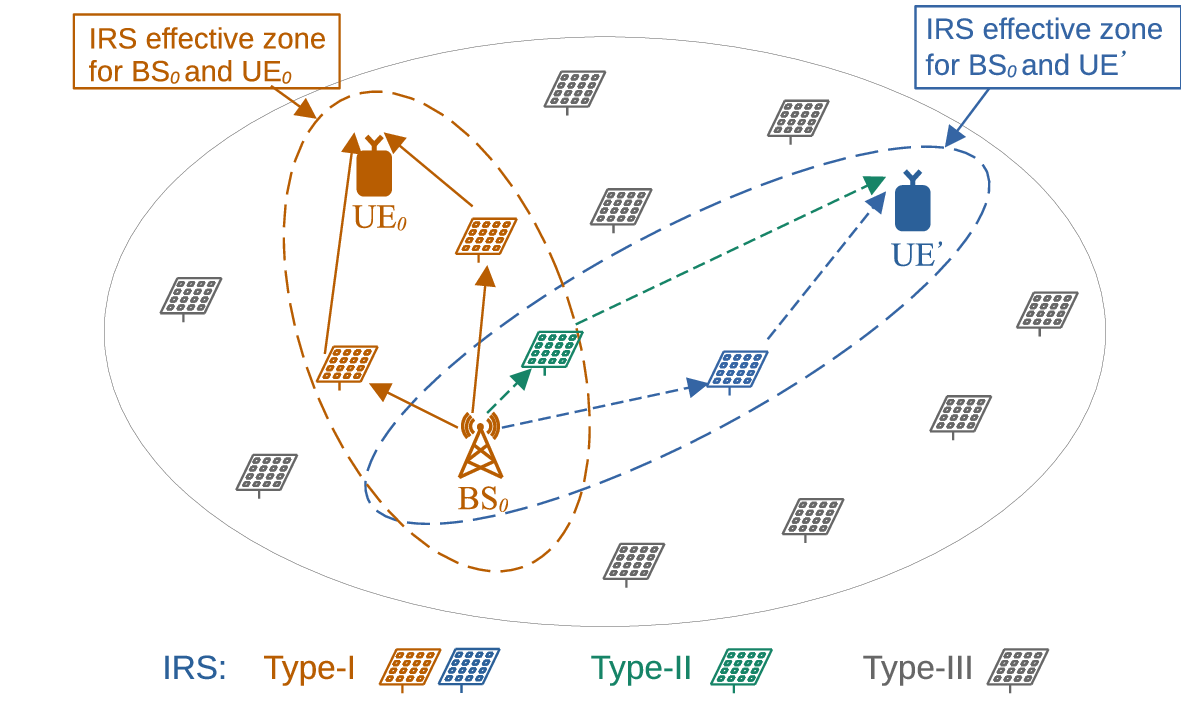}
	\caption{Downlink of a hybrid IRS-aided network. The configuration of the IRS effective zone for transceiver pairs undergoes variations based on the ratio of the integrated path loss distance of the cascaded channel to the link distance between the BS and UE. In accordance with the IRS effective zone, the IRSs are categorized into three types: Type-I, the IRSs located within the IRS effective zone for a single transceiver pair; Type-II, the IRSs positioned within the intersection of the IRS effective zones for more than one transceiver pairs; Type-III, the IRSs situated outside the IRS effective zones. }
	\label{fig:multipleCellNetwork}
\end{figure}

We consider a hybrid IRS-aided network, as illustrated in Fig. \ref{fig:multipleCellNetwork}, where both BSs and UEs are equipped with a single antenna and orthogonal multiple access is employed within each cell. Without loss of generality, we assume that each BS has an infinitely backlogged queue, i.e., the BS always has data to transmit to each user, \cite{wang2014increasing,liu2003framework}. Each hybrid IRS has $N$ reflecting elements and contains two sub-surfaces. One of the sub-surfaces contains $N_{\rm pas}$ passive reflecting elements along with a phase-shift controller, while the other sub-surface contains $N_{\rm act}$ active reflecting elements, a phase-shift controller, and an additional amplify controller, as shown in Fig. \ref{fig:hybridIrs} \cite{kang2023active}. We assume that $J$ hybrid IRSs are uniformly distributed throughout the whole area, and we denote the set of IRSs as $\mathcal{J}\triangleq\{{\rm IRS}_{1},\cdots,{\rm IRS}_{J}\}$. 

The reflection matrix serves as a crucial distinguishing factor between passive and active sub-surfaces. Specifically, the reflection matrix of the passive sub-surface is presented as ${\bf\Psi}_{\rm pas}\triangleq{\rm diag}(e^{j\phi_{1}^{\rm pas}},\cdots,e^{j\phi_{N_{\rm pas}}^{\rm pas}})\in \mathbb{C}^{N_{\rm pas}\times N_{\rm pas}}$, where $\phi_{n_{\rm pas}}^{\rm pas}$ represents the phase shift of the $n_{\rm pas}$-th passive reflecting element with $n_{\rm pas}\in\mathcal{N}_{\rm pas}\triangleq\{1,\cdots,N_{\rm pas}\}$, and the reflection amplitude of the passive reflecting elements are set as one. In contrast, the reflection matrix of the active sub-surface is denoted as ${\bf\Psi}^{({\bf \eta})}_{\rm act}\triangleq {\rm diag}(\eta_{1}e^{j\phi_{1}^{\rm act}},\cdots, \eta_{N_{\rm act}}e^{j\phi_{N_{\rm act}}^{\rm act}})\in \mathbb{C}^{N_{\rm act}\times N_{\rm act}}$, where $\phi_{n_{\rm act}}^{\rm act}$ represents the phase shift of the $n_{\rm act}$-th active reflecting element with $n_{\rm act}\in\mathcal{N}_{\rm act}\triangleq\{1,\cdots,N_{\rm act}\}$, $\eta_{n_{\rm act}}$ denotes the reflection amplitude of the $n_{\rm act}$-th active reflecting element. Without loss of generality, for the associated IRS, we assume a common amplification factor across all active reflecting elements,\footnote{As shown in \cite{kang2023active}, all active reflecting elements are supposed to adopt identical amplification factor as they suffer the same path loss. Nonetheless, this may not suitable for multi-UE scenario as the beamforming design should balance the tradeoff among numerous UEs.} i.e., $\eta_{n_{\rm act}}^{2}\triangleq \frac{P_{\rm F}}{N_{\rm act}(P_{\rm T}\mathbb{E}[G_{\rm BI}]\epsilon d_{\rm BI}^{-\alpha}+\delta_{\rm F}^{2})},\forall n_{\rm act}$ \cite{lonvcar2020challenges,amato2018tunneling,you2021wireless}, where $P_{\rm F}$ is the amplification power of the active reflecting elements, $\mathbb{E}[G_{\rm BI}]$ is the average small-scale channel gain of BS$\rightarrow$IRS link, $\delta_{\rm F}^{2}$ is the thermal noise power generated by the active reflecting elements, $P_{\rm T}$ is the transmit power of the BS,  $\epsilon$ represents the reference channel power gain at a distance of 1 meter (m), and $\alpha$ is the path loss exponent ($\alpha>2$) \cite{you2021wireless}. Then, the reflection matrix of the active sub-surface can be further expressed as ${\bf\Psi}^{({\bf \eta})}_{\rm act}\triangleq\eta{\bf\Psi}_{\rm act}$, where $ {\bf\Psi}_{\rm act}= {\rm diag}(e^{j\phi_{1}^{\rm act}},\cdots,e^{j\phi_{N_{\rm act}}^{\rm act}})\in \mathbb{C}^{N_{\rm act}\times N_{\rm act}}$.

We examine the cascaded channel between a typical UE and its serving BS via the associated IRS. We present an example case in which the signal reflected by all reflecting elements on the same IRS experiences identical fading and path loss. It is noted that the quality of communication for the cascaded channel is determined by the path loss and the small-scale fading of the integrated cascaded channel (BS$\rightarrow$IRS$\rightarrow$UE) rather than those of each individual link. In other words, even if one link of the cascaded channel is lossless, severe blockage of the other link may still render the entire cascaded channel non-functional. Therefore, this work emphasizes the integrated channel condition of the cascaded BS$\rightarrow$IRS$\rightarrow$UE channel instead of focusing on individual link conditions, i.e., BS$\rightarrow$IRS link or IRS$\rightarrow$UE link. To facilitate subsequent analysis, we define an integrated path loss distance.

\begin{definition} \label{def:pathLoss}
	\emph{\textbf{Integrated path loss distance:} We define the integrated path loss distance of the IRS-assisted cascaded channel (BS$\rightarrow$IRS$\rightarrow$UE) as $d_{\rm BIU}=d_{\rm BI} \circ d_{\rm IU}$, where $\circ$ depends on the adopted path loss law, $d_{\rm BI}$ and $d_{\rm IU}$ are the link distances of BS$\rightarrow$IRS and IRS$\rightarrow$UE links, respectively. If the product-distance path loss law is assumed, for near-field beamforming or far-field communications, the integrated path loss distance is defined as $d_{\rm BIU,P}=d_{\rm BI} \times d_{\rm IU}$; If the sum-distance path loss law is adopted, for near-field broadcasting, the integrated path loss distance is defined as $d_{\rm BIU,S}=d_{\rm BI} + d_{\rm IU}$ \cite{wu2021intelligent,tang2020wireless}.}
\end{definition}

The small-scale fading of the cascaded BS$\rightarrow$IRS$\rightarrow$UE channel over passive and active sub-surfaces, denoted by ${\mathbf{g}}_{\rm BIU}^{\rm pas}$, and ${\mathbf{g}}_{\rm BIU}^{\rm act}$, respectively, are modeled as
\begin{equation}\label{eq:pasacth}
	{\mathbf{g}}_{\rm BIU}^{\rm pas}= (\mathbf{g}_{\rm IU}^{\rm pas})^{H}{\bf \Psi}_{\rm pas}{\bf g}_{\rm BI}^{\rm pas}, \quad {\mathbf{g}}_{\rm BIU}^{\rm act}= (\mathbf{g}_{\rm IU}^{\rm act} )^{H}{\bf \Psi}_{\rm act}{\bf g}_{\rm BI}^{\rm act},
\end{equation}
where ${\bf g}_{\rm BI}^{\rm pas}\in\mathbb{C}^{N_{\rm pas}\times 1}$, $ (\mathbf{g}_{\rm IU}^{\rm pas})^{H}\in\mathbb{C}^{1\times N_{\rm pas}}$, ${\bf g}_{\rm BI}^{\rm act}\in\mathbb{C}^{N_{\rm act}\times 1}$, $ (\mathbf{g}_{\rm IU}^{\rm act})^{H}\in\mathbb{C}^{1\times N_{\rm act}}$, and we adopt parameter set $(l, j)$ to denote the link type and the mode of IRS sub-surface; $l \in \{{\rm IU}, {\rm BI}\}$ with $l =` {\rm IU}$' representing the IRS$\rightarrow$UE link and $l = `{\rm BI}$' indicating the BS$\rightarrow$IRS link. Similarly, $j \in \{{\rm pas}, {\rm act}\}$ with $j = `{\rm pas}$' denoting the passive IRS sub-surface and $j = `{\rm act}$' indicating the active IRS sub-surface. In (\ref{eq:pasacth}), $\mathbf{g}_{l}^{j}$ represents small-scale fading across the link $l$ of the employed IRS mode $j$, where $n_{j}\in\mathcal{N}_{j}\triangleq\{1,\cdots,N_{j}\}$, $l \in \{{\rm IU}, {\rm BI}\}$, and $j \in \{{\rm pas}, {\rm act}\}$. 

Furthermore, we consider generalized small-scale fading for each individual link to enable the versatility and flexibility of the results. It was displayed in \cite{6059452} that most of the existing fading distributions, i.e., Rayleigh, Rician, one-sided Gaussian, Nakagami-$m$, $\kappa-\mu$, and $\kappa-\mu$ shadowed fading, can be modeled as mixture Gamma distributions with high accuracy and tractability. To this end, we first introduce the definition of mixture Gamma distribution, and the properties of mixture Gamma distributed channels are summarized below \cite{devore1993constructive}. 
\begin{definition}\label{def:MG}
	\emph{\textbf{The mixture Gamma distribution:} A random variable $G$ that follows a mixture Gamma distribution is denoted as $G \sim \mathcal{MG}(\varepsilon_{i},\beta_{i},\xi_{i})$, and its PDF, CDF and moments are given by
		\begin{subequations} \label{eq:MG}
			\begin{alignat}{2}
				f_{\rm G}(x) =  & \sum_{i=1}^{I}\varepsilon_{i}x^{\beta_{i}-1}e^{-\xi_{i}x}, \label{eq: MG_PDF} \\
				F_{\rm G}(x) = & \sum_{i=1}^{I} \varepsilon_{i}\xi_{i}^{-\beta_{i}}\gamma(\beta_{i}, \xi_{i}x), \label{eq: MG_CDF}  \\
				\mathbb{E}_{\rm G}\left[ x^r  \right] = &  \sum_{i=1}^{I}  \varepsilon_{i} \frac{\Gamma(\beta_i+r)}{\xi_i^{\beta_i+r}},\label{eq: MG_moments} 
				% \quad \mathcal{L}_{\rm H}(s) = \sum_{i=1}^{I}  \varepsilon_{i} \frac{\Gamma(\beta_i)}{(\xi_i+s)^{\beta_i}},
			\end{alignat}
		\end{subequations}
		where $\{\varepsilon_{i},\beta_{i},\xi_{i}\}$ are the parameters for each Gamma component, $\gamma(\cdot,\cdot)$ is the lower incomplete Gamma function.\footnote{The number of Gamma components, $I$, is usually set as 20 to guarantee sufficient accuracy, i.e., approximation error less than $10^{-6}$ \cite{6059452,10458985}. }}
\end{definition}

Moreover, \cite{10458985} has proved that the cascaded small-scale channel gain can be modeled as a mixture Gamma distribution with high accuracy and tractability.

\begin{figure}[t!]
	\centering
	\includegraphics[width=\linewidth]{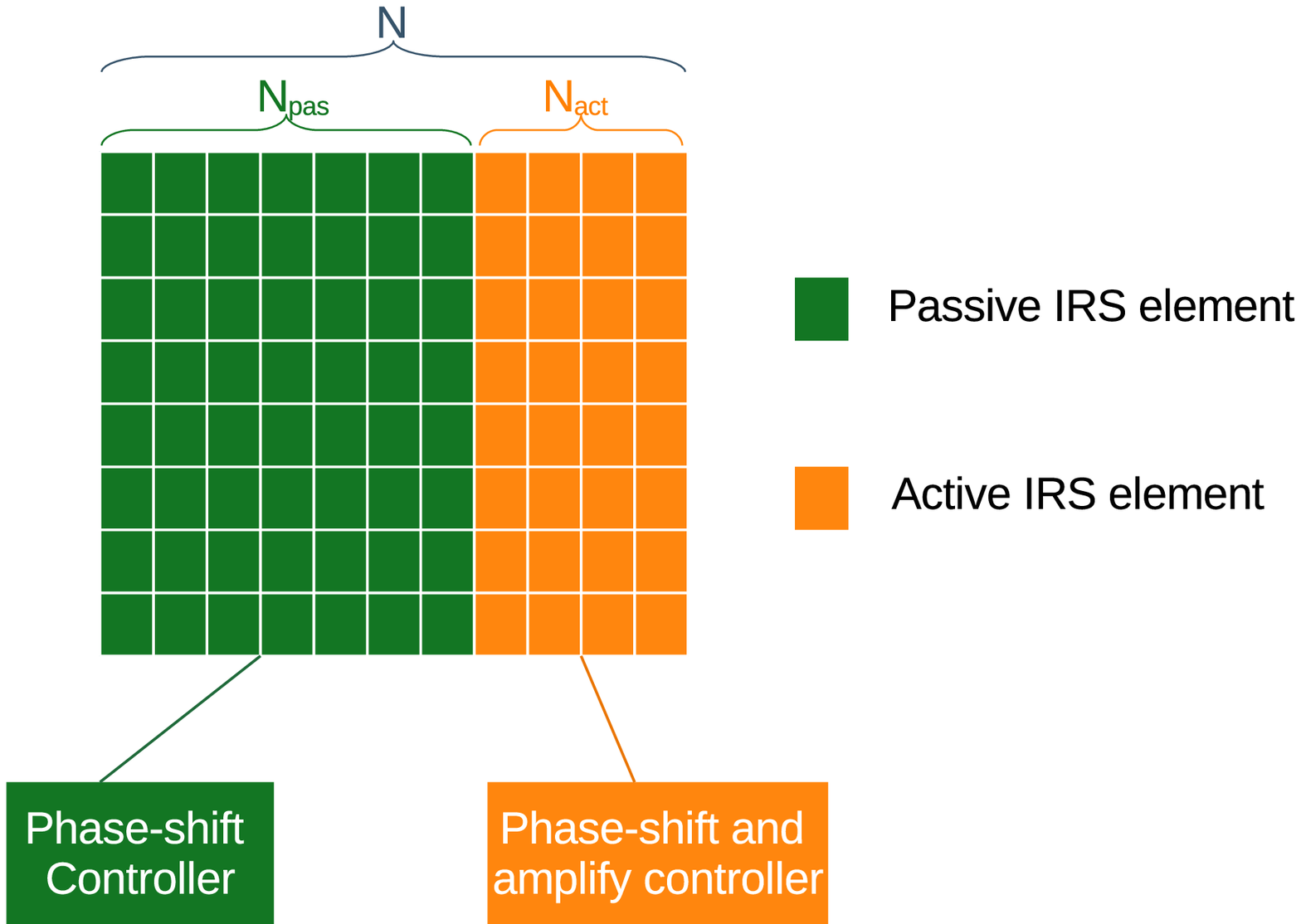}
	\caption{Hybrid IRS: One sub-surface contains $N_{\rm pas}$ passive reflecting elements along with a phase-shift controller, while the other contains $N_{\rm act}$ active reflecting elements, a phase-shift controller, and an amplify controller \cite{kang2023active}.}
	\label{fig:hybridIrs}
\end{figure}

\section{Geometric Models for the Integrated Path Loss Distance }  \label{sec:geo}

In this section, we study geometric models for the integrated path loss distance under product- and sum-distance path loss laws, as defined in Definition \ref{def:pathLoss}. We derive the CDF and PDF of the integrated path loss distance of the cascaded channel based on the proposed geometric models. Herein, we consider an end-to-end communication scenario, where the typical UE and its serving BS are symmetrically located along the x-axis at Cartesian coordinates $(c,0)$ and $(-c,0)$, respectively. We describe the geometric models for two path loss laws below.

\begin{figure*}[t]
	\centering
	\subfigure[Product-distance path loss law.]{
		\begin{minipage}[t]{0.47\textwidth}
			\includegraphics[width=\linewidth]{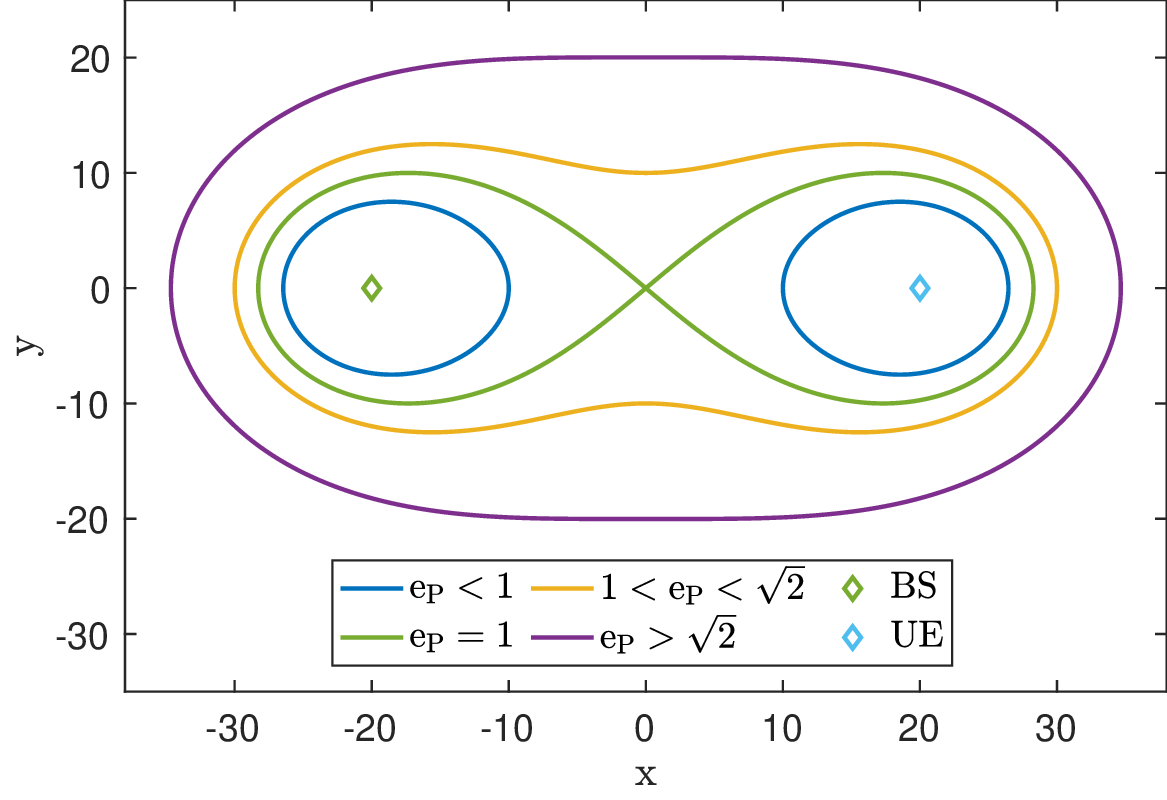}
		\end{minipage}
		\label{fig:cassiniOval}
	}
	\hspace{2pt}
	\subfigure[Sum-distance path loss law.]{
		\begin{minipage}[t]{0.47\textwidth}
			\includegraphics[width=\linewidth]{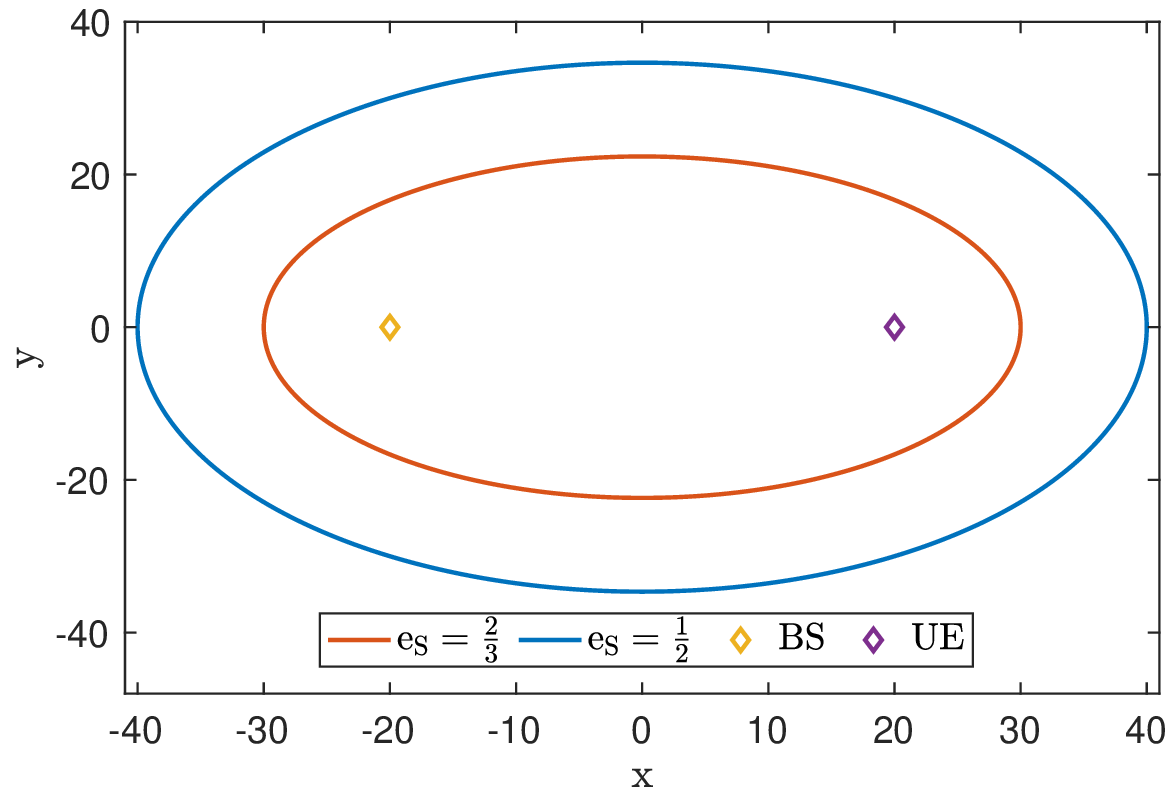}
		\end{minipage}
		\label{fig:ellipse}
	}
	\caption{Equal integrated path loss distance ($d_{\rm BIU}$) trajectories of IRS. }
	\label{fig:IRStrjectory}
\end{figure*}

\subsection{Product-Distance Path Loss Law: Cassini Oval}

For product-distance path loss law, the integrated path loss distance of the cascaded channel is defined as the product of individual link distance, i.e., $d_{\rm BIU,P} = d_{\rm BI}\times d_{\rm IU}$, which has been validated in \cite{tang2020wireless} to be suitable for both near-field beamforming and far-field communication scenarios.

\begin{theorem} \label{theorem:cassiniOval}
	\emph{\textbf{Equal product-distance trajectory (Cassini oval):}} For the product-distance path loss law,  the trajectory of IRSs with the equal integrated path loss distance of the cascaded BS$\rightarrow$IRS$\rightarrow$UE channel, $d_{\rm BIU,P}$, can be represented by a Cassini oval with the BS and UE as its foci, and its equation is given by
	\begin{equation} \label{eq:cassiniOval}
		(x^2+y^2)^2-2c^2(x^2-y^2)=d_{\rm BIU,P}^2-c^4,
	\end{equation}
	where the Cartesian coordinates of the typical UE and its serving BS are $ (\pm c,0) $, and $(x,y)$ is the Cartesian coordinates of the IRSs. The enclosed area of the equal product-distance trajectory is determined by \eqref{eq:areaOfCassiniOval}, denoted by $S(d_{\rm BIU,P})$,
	\begin{figure*}[b]
        \hrule
	\begin{equation}\label{eq:areaOfCassiniOval}
		\begin{split}
			S(d_{\rm BIU,P}) = 
			\begin{cases}
				2{c^2}\left[E\left(\frac{d_{\rm BIU,P}}{c^2}\right)-\left(1-\frac{d_{\rm BIU,P}^2}{c^4}\right)K\left(\frac{d_{\rm BIU,P}}{c^2}\right)\right],  & \text{for } {\rm e}_{\rm P} < 1 \\ 
				{2d_{\rm BIU,P}}E\left(\frac{c^2}{d_{\rm BIU,P}}\right), & \text{for } {\rm e}_{\rm P}\geq 1 
			\end{cases},
		\end{split}
	\end{equation}
\end{figure*}
where ${\rm e}_{\rm P}=\frac{\sqrt{d_{\rm BIU,P}}}{c}$, $K(\cdot)$ and $E(\cdot)$ are the complete elliptic integral of the first and second kind, respectively \cite{abramowitz1964handbook}.
\end{theorem}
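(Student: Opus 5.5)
The plan has two parts: derive the curve equation by direct algebra, then compute the enclosed area by integrating in polar coordinates and reducing to complete elliptic integrals.

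\textbf{Curve equation.} Place the BS at $(-c,0)$ and the UE at $(c,0)$, and let an IRS sit at $(x,y)$. Then $d_{\rm BI}^2=(x+c)^2+y^2$ and $d_{\rm IU}^2=(x-c)^2+y^2$. Setting $d_{\rm BI}d_{\rm IU}=d_{\rm BIU,P}$ and squaring gives
\begin{equation*}
\bigl[(x^2+y^2+c^2)+2cx\bigr]\bigl[(x^2+y^2+c^2)-2cx\bigr]=d_{\rm BIU,P}^2 .
\end{equation*}
Expanding the left side as $(x^2+y^2+c^2)^2-4c^2x^2$ and rearranging yields $(x^2+y^2)^2-2c^2(x^2-y^2)=d_{\rm BIU,P}^2-c^4$, which is \eqref{eq:cassiniOval}. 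This is, by definition, the Cassini oval with foci $(\pm c,0)$. Squaring loses nothing, since both factors are nonnegative.

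\textbf{Polar form.} Substitute $x=r\cos\theta$, $y=r\sin\theta$ and write $b:=d_{\rm BIU,P}$. The equation becomes $r^4-2c^2r^2\cos2\theta+c^4-b^2=0$, so
\begin{equation*}
r_\pm^2=c^2\cos2\theta\pm\sqrt{b^2-c^4\sin^2 2\theta}.
\end{equation*}

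\textbf{Area in each regime.} The two cases correspond to ${\rm e}_{\rm P}=\sqrt{b}/c$ being below or above $1$.
\begin{itemize}
\item For ${\rm e}_{\rm P}\ge1$, i.e.\ $b\ge c^2$, the curve is a single loop. Only $r_+$ is admissible for every $\theta$, so
\begin{equation*}
S=\tfrac12\int_0^{2\pi}r_+^2\,d\theta=\int_0^{2\pi}\tfrac12\sqrt{b^2-c^4\sin^2 2\theta}\,d\theta ,
\end{equation*}
because the $\cos2\theta$ term integrates to zero. By the period-$\pi/2$ symmetry and the substitution $\phi=2\theta$, this reduces to $2b\int_0^{\pi/2}\sqrt{1-k^2\sin^2\phi}\,d\phi$ with $k=c^2/b$, which equals $2bE(c^2/b)$.
\item For ${\rm e}_{\rm P}<1$, the curve splits into two ovals around the foci. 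For $|\sin2\theta|\le b/c^2$ and $\cos2\theta>0$, each ray meets an oval between $r_-$ and $r_+$, so the area is
\begin{equation*}
S=2\cdot 2\int_0^{\theta_0}\tfrac12\bigl(r_+^2-r_-^2\bigr)\,d\theta=2\int_0^{\theta_0}2\sqrt{b^2-c^4\sin^2 2\theta}\,d\theta ,\qquad \sin2\theta_0=b/c^2 .
\end{equation*}
Substituting $c^2\sin2\theta=b\sin\psi$ with $\psi\in[0,\pi/2]$ turns this into an integral of the form $\int_0^{\pi/2}\frac{b^2\cos^2\psi}{c^2\sqrt{1-k^2\sin^2\psi}}\,d\psi$ with $k=b/c^2$. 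Writing $\cos^2\psi=\frac{(1-k^2\sin^2\psi)-(1-k^2)}{k^2}$ splits it into $E(k)-(1-k^2)K(k)$, up to the prefactor $c^2$.
\end{itemize}

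\textbf{Main obstacle.} The hardest part is the bookkeeping in the ${\rm e}_{\rm P}<1$ case: the symmetry factors, the Jacobian $d\theta=\frac{b\cos\psi}{2c^2\cos2\theta}\,d\psi$, and the fact that $\cos2\theta$ cancels against $r_+^2-r_-^2$ only after the substitution. Getting the overall constant to come out as $2c^2$ needs care here. As a check, I will verify continuity of both expressions at $b=c^2$, where both should give $2c^2E(1)=2c^2$, which is the lemniscate area.
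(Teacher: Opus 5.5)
Your proposal is correct and follows essentially the same route as the paper's proof. Both use the polar form of the oval, take the area as $\frac12\int_0^{2\pi} r_+^2\,d\theta$ when ${\rm e}_{\rm P}\ge 1$ and as $2\cdot 2\int_0^{\theta_0}\frac12(r_+^2-r_-^2)\,d\theta$ over the two loops when ${\rm e}_{\rm P}<1$, and then reduce to complete elliptic integrals. Your direct substitution $c^2\sin2\theta=b\sin\psi$ in the two-loop case is slightly cleaner than the paper's detour through the incomplete integral $E[\sin^{-1}({\rm e}_{\rm P}^2),{\rm e}_{\rm P}^{-2}]$, and you also supply the squaring derivation of \eqref{eq:cassiniOval}, which the paper's appendix takes for granted.
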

\begin{proof}
See Appendix~\ref{append:theoremCassiniOval}.
\end{proof}

\begin{remark}\emph{
		The equal product-distance trajectory of the integrated path loss distance in Theorem~\ref{theorem:cassiniOval} exhibits varying shapes depending on the parameter, ${\rm e}_{\rm P}$, as shown in Fig. \ref{fig:cassiniOval}. In the following, we will discuss these four distinct shapes:
		\begin{itemize}
			\item If ${\rm e}_{\rm P}<1$, i.e., $d_{\rm BIU,P} < c^2$, the trajectory consists of two foci-centered disconnected loops, the IRS with smallest $d_{\rm BIU,P}$ is supposed to be located in the vicinity of the BS or typical UE with a preference for being positioned between transceivers. This observation differs from the results obtained from the optimization perspective in \cite{wu2021intelligent}. In \cite{wu2021intelligent}, the IRSs are constrained on the segment between the typical UE and its serving BS, within which the area outside the segment is not investigated, and it is a one-dimensional special case of this Cassini oval model.
			\item If ${\rm e}_{\rm P}=1$, i.e., $d_{\rm BIU,P} = c^2$, the trajectory is a lemniscate of Bernoulli with the shape of a sideways figure eight and a double point at the middle-point of the BS and typical UE. The presence of double point in the trajectory causes the breakdown of area function in Theorem~\ref{theorem:cassiniOval} into piecewise functions, producing an impulse on the PDF of $d_{\rm BIU}$, as verified numerically in Fig.~\ref{fig:PDF_cassiniOval}.
			\item If ${\rm e}_{\rm P}>1$, i.e., $d_{\rm BIU,P} > c^2$, the trajectory is a continuous loop. When $1<{\rm e}_{\rm P}<\sqrt{2}$, the trajectory shapes like a peanut; When ${\rm e}_{\rm P} \geq\sqrt{2}$, the trajectory is a convex loop, and the vicinity preference of IRS vanishes, which is similar to the case of sum-distance path loss law. 
	\end{itemize}
    Therefore, the individual link distance of the IRS$\rightarrow$UE link $d_{\rm IU}$ or BS$\rightarrow$IRS link $d_{\rm BI}$ does not solely determine the integrated path loss distance $d_{\rm BIU}$. As shown in Fig.~\ref{fig:IRStrjectory}, for both product- and sum-distance path loss laws, the $d_{\rm BIU}$ varies with the locations of the IRS for the same $d_{\rm IU}$ or $d_{\rm BI}$. These geometric models offer robust theoretical support for the IRS deployment findings in \cite{moustakas2023reconfigurable,ren2023deployment}, and elucidates the IRS's distinct behavior at varying elevations when the IRS is located between the BS and UE, which are also one-dimensional instances of the proposed Cassini oval model.
}
\end{remark}

Using Theorem~\ref{theorem:cassiniOval}, we derived the PDF and CDF of the integrated path loss distance $d_{\rm BIU}$ for the product-distance path loss model, as shown below.

\begin{lemma} \label{lem:cassiniOvalPCDF}
	The CDF and PDF of the product distance $d_{\rm BIU,P}$ are respectively given by \eqref{eq:cassiniCDF} and \eqref{eq:cassiniPDF}, when IRS is located within the Cassini oval with maximum integrated path loss distance $d_{\rm t}$,
	\begin{figure*}
	\begin{equation} \label{eq:cassiniCDF}
		\begin{split}
			F_{\rm P}(d_{\rm BIU,P}) =  
			\begin{cases}
				\frac{2c^2}{S(d_{\rm t})}\left[E\left(\frac{d_{\rm BIU,P}}{c^2}\right)-\left(1-\frac{d_{\rm BIU,P}^2}{c^4}\right)K\left(\frac{d_{\rm BIU,P}}{c^2}\right)\right],  & \text{for } {\rm e}_{\rm P} < 1 \\ 
				\frac{2d_{\rm BIU,P}}{S(d_{\rm t})}E\left(\frac{c^2}{d_{\rm BIU,P}}\right), & \text{for } {\rm e}_{\rm P}\geq 1 
			\end{cases}, 
		\end{split}
	\end{equation}
\hrule
\end{figure*} 
	\begin{equation}\label{eq:cassiniPDF}
		\begin{split}
			f_{\rm P}(d_{\rm BIU,P}) = 
			\begin{cases}
				\frac{{2}d_{\rm BIU,P}}{c^2{S(d_{\rm t})}}K\left(\frac{d_{\rm BIU,P}}{c^2}\right),  & \text{for } {\rm e}_{\rm P} < 1, \\ 
				\frac{2}{S(d_{\rm t})}K\left(\frac{c^2}{d_{\rm BIU,P}}\right), & \text{for } {\rm e}_{\rm P} \geq 1. 
			\end{cases} 
		\end{split}
	\end{equation}
The enclosed area of $d_{\rm t}$, denoted by $S(d_{\rm t})$, is defined in \eqref{eq:areaOfCassiniOval}, which is based on the Cassini oval model in Theorem \ref{theorem:cassiniOval}.
\end{lemma}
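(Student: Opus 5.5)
The plan is to treat the IRS location as uniformly distributed over the region enclosed by the Cassini oval of level $d_{\rm t}$. Under this model the CDF of $d_{\rm BIU,P}$ is a ratio of areas:
\[
F_{\rm P}(d)=\Pr\{d_{\rm BI}d_{\rm IU}\le d\}=\frac{S(d)}{S(d_{\rm t})}, \qquad 0\le d\le d_{\rm t}.
\]
This holds because the sublevel set $\{(x,y): d_{\rm BI}d_{\rm IU}\le d\}$ is exactly the region enclosed by the trajectory \eqref{eq:cassiniOval} of Theorem~\ref{theorem:cassiniOval}. For $d<c^2$ it is the two loops, and for $d\ge c^2$ it is the single loop. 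Substituting \eqref{eq:areaOfCassiniOval} in each regime ${\rm e}_{\rm P}<1$ and ${\rm e}_{\rm P}\ge 1$ gives \eqref{eq:cassiniCDF} directly. The only point to check is that the sublevel set is nested and increasing in $d$, which is immediate since $d_{\rm BI}d_{\rm IU}$ is a continuous function of position.

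The PDF follows by differentiating in $d$, setting $f_{\rm P}(d)=S'(d)/S(d_{\rm t})$. I would use the standard derivatives of the complete elliptic integrals with respect to the modulus $k$:
\[
\frac{dE}{dk}=\frac{E-K}{k}, \qquad \frac{dK}{dk}=\frac{E-(1-k^2)K}{k(1-k^2)}.
\]
For ${\rm e}_{\rm P}<1$, put $k=d/c^2$, so that $S=2c^2[E(k)-(1-k^2)K(k)]$. Then
\[
\frac{dS}{dk}=2c^2\Big[\frac{E-K}{k}+2kK-\frac{E-(1-k^2)K}{k}\Big]=2c^2\Big[\frac{-K+(1-k^2)K}{k}+2kK\Big]=2c^2kK.
\]
Multiplying by $dk/d=1/c^2$ gives $S'(d)=2kK(k)=\frac{2d}{c^2}K(d/c^2)$, which matches \eqref{eq:cassiniPDF}.

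For ${\rm e}_{\rm P}\ge1$, put $k=c^2/d$, so that $S=2dE(k)$. Then
\[
S'(d)=2E+2d\,\frac{E-K}{k}\cdot\Big(-\frac{c^2}{d^2}\Big)=2E-2(E-K)=2K(k),
\]
which again matches \eqref{eq:cassiniPDF}.

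The main obstacle is the boundary case $d=c^2$, where $k\to1$ and $K$ diverges logarithmically. There I would check that $S$ is still continuous: both branches give $2c^2E(1)=2c^2$. The density is then integrable, since the $\log$ singularity is integrable, so the derivative formulas hold on each open piece and the CDF is absolutely continuous across the lemniscate. The "impulse" mentioned in the remark is really the integrable log spike, so no atom needs to be added. Finally, I would confirm normalization via $F_{\rm P}(d_{\rm t})=1$, which is automatic from the area-ratio form.
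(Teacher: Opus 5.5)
Your proposal is correct and follows the same route as the paper's appendix: the CDF is the area ratio $S(d)/S(d_{\rm t})$ built from \eqref{eq:areaOfCassiniOval}, and the PDF comes from differentiating with the standard derivatives of $E(k)$ and $K(k)$, which you carry out explicitly and correctly in both regimes. Your extra check that $S$ is continuous at $d=c^2$ (both branches equal $2c^2$) fills in a detail the paper omits, and you are right that the behaviour at ${\rm e}_{\rm P}=1$ is an integrable logarithmic spike, not a point mass.
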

\begin{proof}
	See Appendix~\ref{append:lemma1}.
\end{proof}

\subsection{Sum-Distance Path Loss Law: Ellipse} 

The integrated path loss distance of the cascaded channel for the sum-distance path loss model is defined as the sum of individual link distance, i.e., $d_{\rm BIU,S} = d_{\rm BI}+d_{\rm IU}$, validated in \cite{tang2020wireless} to be suitable for the near-field broadcasting scenario.

\begin{proposition} \label{pro:ellipse}
	\emph{\textbf{Equal sum-distance trajectory (Ellipse):}} For the sum-distance path loss law, the trajectory of IRSs with the equal integrated path loss distance of the cascaded BS$\rightarrow$IRS$\rightarrow$UE channel, $d_{\rm BIU,S}$, follows an ellipse with the BS and UE as its foci, as expressed below
 % \begin{small}
	\begin{equation} \label{eq:ellipse}
		\left(\Delta-c^2\right) x^2+\Delta y^2=\Delta\left(\Delta-c^2\right),
        \quad \Delta \triangleq \frac{d_{\rm BIU,S}^2}{4},
	\end{equation}
	% \end{small}
 where the Cartesian coordinates of the typical UE and its serving BS are $ (\pm c,0) $, and $(x,y)$ is the Cartesian coordinates of the IRSs. The enclosed area of the equal sum-distance trajectory is determined by  \cite{berger2009geometry}, denoted as $S(d_{\rm BIU,S})$,
	\begin{equation}\label{eq:areaOfEllipse}
		S(d_{\rm BIU,S}) = \frac{\pi}{4} d_{\rm BIU,S}\sqrt{d_{\rm BIU,S}^{2}-4c^{2}}.
	\end{equation}
\end{proposition}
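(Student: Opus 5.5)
The plan is to compute directly in Cartesian coordinates, mirroring the route taken for the Cassini oval in Theorem~\ref{theorem:cassiniOval}. Place the serving BS at $(-c,0)$, the typical UE at $(c,0)$, and an IRS at $(x,y)$. Then $d_{\rm BI}=\sqrt{(x+c)^2+y^2}$ and $d_{\rm IU}=\sqrt{(x-c)^2+y^2}$, and by Definition~\ref{def:pathLoss} the equal-gain locus is
\[
\sqrt{(x+c)^2+y^2}+\sqrt{(x-c)^2+y^2}=d_{\rm BIU,S}.
\]
This is already the classical two-foci description of an ellipse with foci at the BS and UE. The work is to turn it into the polynomial form \eqref{eq:ellipse} and to record when the locus is nondegenerate. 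By the triangle inequality, $d_{\rm BIU,S}\ge 2c$, so $\Delta\ge c^2$. Equality gives the degenerate segment between the transceivers, so I will assume $d_{\rm BIU,S}>2c$.

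The key steps are as follows.
\begin{enumerate}
\item Isolate one radical, $\sqrt{(x+c)^2+y^2}=d_{\rm BIU,S}-\sqrt{(x-c)^2+y^2}$, and square both sides. The quadratic terms cancel, leaving
\[
d_{\rm BIU,S}\sqrt{(x-c)^2+y^2}=\tfrac{d_{\rm BIU,S}^2}{4}\cdot 2-2cx .
\]
Dividing by $2$, this reads $\sqrt{\Delta}\,\sqrt{(x-c)^2+y^2}=\Delta-cx$.
\item Square again. This gives
\[
\Delta\bigl(x^2-2cx+c^2+y^2\bigr)=\Delta^2-2c\Delta x+c^2x^2 .
\]
The linear terms cancel, and rearranging yields $(\Delta-c^2)x^2+\Delta y^2=\Delta(\Delta-c^2)$, which is \eqref{eq:ellipse}.
\item Check the converse, so that squaring has not introduced spurious points. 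On the ellipse, $|x|\le\sqrt{\Delta}$, so $\Delta-cx\ge\Delta-c\sqrt{\Delta}>0$. Hence both squaring steps are reversible, and \eqref{eq:ellipse} is exactly the equal-sum locus.
\end{enumerate}

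For the area, I will divide \eqref{eq:ellipse} by $\Delta(\Delta-c^2)$ to obtain the standard form $x^2/a^2+y^2/b^2=1$ with semi-axes $a=\sqrt{\Delta}=d_{\rm BIU,S}/2$ and $b=\sqrt{\Delta-c^2}=\tfrac12\sqrt{d_{\rm BIU,S}^2-4c^2}$. The elementary formula $S=\pi ab$ (or integrating $2b\sqrt{1-x^2/a^2}$ over $[-a,a]$) then gives \eqref{eq:areaOfEllipse}.

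I expect the only delicate point to be the reversibility of the squaring steps in step~3, together with stating the condition $d_{\rm BIU,S}>2c$ so that $\Delta-c^2>0$ and the curve is a genuine ellipse. Everything else is routine algebra.
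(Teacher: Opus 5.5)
Your proof is correct and follows the paper's route. The paper gives no separate derivation: it relies on the constant-sum, two-foci definition of the ellipse and takes the area $\pi ab$ from the cited geometry text, so your double-squaring derivation is a fuller version of that argument. One small fix in step 3: the positivity of $\Delta-cx$ only justifies undoing the second squaring. Undoing the first also needs $d_{\rm IU}\le d_{\rm BIU,S}$, which follows from $d_{\rm IU}=(\Delta-cx)/\sqrt{\Delta}\le\sqrt{\Delta}+c<2\sqrt{\Delta}=d_{\rm BIU,S}$.
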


Fig. \ref{fig:ellipse} shows the equal sum-distance trajectory for sum-distance path loss model, where ${\rm e}_{\rm S} = \frac{2c}{d_{\rm BIU,S}}$. The equal trajectories of the sum-distance path loss law are confocal ellipses. In the following Lemma, we derived the CDF and PDF of the integrated path loss distance for the sum-distance path loss law using the equations in Proposition~\ref{pro:ellipse}.

\begin{lemma} \label{lem:ellipsePCDF}
	Based on Proposition \ref{pro:ellipse}, the CDF and PDF of the product distance $d_{\rm BIU,S}$ can be expressed as \eqref{eq:ellipseCDF} and \eqref{eq:ellipsePDF}, respectively, when IRS is located within the Ellipse with maximum integrated path loss distance $d_{\rm t}$.
	%\begin{small}
	\begin{equation} \label{eq:ellipseCDF}
		F_{\rm S}(d_{\rm BIU,S}) =  \frac{\pi d_{\rm BIU,S}\sqrt{d_{\rm BIU,S}^{2}-4c^{2}}}{4S(d_{\rm t})},
	\end{equation}
%\end{small}
	%\begin{small}
		\begin{equation} \label{eq:ellipsePDF}
		f_{\rm S}(d_{\rm BIU,S}) =  \frac{\pi}{2S(d_{\rm t})}\frac{d_{\rm BIU,S}^{2}-2c^2}{\sqrt{d_{\rm BIU,S}^{2}-4c^2}}.
	\end{equation}
%\end{small}
The enclosed area of $d_{\rm t}$, denoted by $S(d_{\rm t})$, is defined in \eqref{eq:areaOfEllipse}.
\end{lemma}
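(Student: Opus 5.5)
The plan is to read the CDF off Proposition~\ref{pro:ellipse}, using the fact that the IRS is uniformly distributed over the region enclosed by the ellipse with maximum integrated path loss distance $d_{\rm t}$; the PDF then follows by differentiation. Since $d_{\rm BI}+d_{\rm IU}$ increases as one moves outward across the family of confocal ellipses with foci $(\pm c,0)$, the event $\{d_{\rm BIU,S}\le r\}$ coincides with the event that the IRS lies inside the ellipse \eqref{eq:ellipse} with parameter $r$. This region is contained in the region bounded by the $d_{\rm t}$-ellipse whenever $2c\le r\le d_{\rm t}$. Under the uniform law, the CDF is therefore the ratio of enclosed areas, $F_{\rm S}(r)=S(r)/S(d_{\rm t})$, and substituting \eqref{eq:areaOfEllipse} gives \eqref{eq:ellipseCDF} directly.

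First I would justify the nesting and monotonicity claim. The ellipse in \eqref{eq:ellipse} has semi-major axis $a=r/2$ and semi-minor axis $b=\sqrt{r^2/4-c^2}$, and both increase strictly in $r$. Also, any point strictly inside the ellipse of parameter $r$ has focal-distance sum less than $r$, by the triangle inequality applied to the ray from a focus through the point. Together these show that the sublevel sets of $d_{\rm BI}+d_{\rm IU}$ are exactly the closed elliptical disks. The area $\pi ab=\frac{\pi}{4}r\sqrt{r^2-4c^2}$ then recovers \eqref{eq:areaOfEllipse}. The support of the distribution is $[2c,d_{\rm t}]$, and $F_{\rm S}(2c)=0$ because that ellipse degenerates to the BS--UE segment, which has zero area.

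Second, I would differentiate with respect to $r$:
\begin{equation*}
\frac{d}{dr}\Big(r\sqrt{r^2-4c^2}\Big)=\sqrt{r^2-4c^2}+\frac{r^2}{\sqrt{r^2-4c^2}}=\frac{2r^2-4c^2}{\sqrt{r^2-4c^2}}.
\end{equation*}
Multiplying by $\frac{\pi}{4S(d_{\rm t})}$ gives $\frac{\pi}{2S(d_{\rm t})}\frac{r^2-2c^2}{\sqrt{r^2-4c^2}}$, which is \eqref{eq:ellipsePDF}.

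The only real subtlety is making the uniform-location assumption explicit and restricting to the truncated region, so that the normalization is by $S(d_{\rm t})$. The singularity of the PDF at $r=2c$ is integrable, so it needs nothing beyond a remark. Everything else is direct substitution.
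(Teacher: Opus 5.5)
Your proposal is correct and takes the same route as the paper: the paper's appendix says only that the CDF is the area ratio $S(r)/S(d_{\rm t})$ under uniform IRS placement and that the PDF follows by differentiating. Your differentiation is correct, and your explicit check that the sublevel sets of $d_{\rm BI}+d_{\rm IU}$ are the nested confocal elliptical disks supplies a step the paper leaves implicit.
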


\subsection{Discussions on the Geometric Models}

Building upon the geometric models, several observations have been made regarding the integrated path loss distance of the cascaded channel. First, the IRS locations with the smallest $d_{\rm BIU}$ are dependent on both $d_{\rm IU}$ and the relative positions of IRS and UE/BS, as shown in Fig. \ref{fig:IRStrjectory}. Second, as demonstrated by the geometric models, there are instances where the preference for passive IRS location in the vicinity of transceivers is absent. Specifically, under product-distance path loss law with ${\rm e}_{\rm P}\geq 1$ and sum-distance path loss law, the preference of the vicinity of transceivers does not persist, which is different from the product-distance path loss law with ${\rm e}_{\rm P}< 1$. Even though the product-distance path loss law maintains a preference for proximity if ${\rm e}_{\rm P}<1$, this preference is not uniform and diminishes with increasing ${\rm e}_{\rm P}$.

The geometric models offer valuable insights into the deployment strategies of passive IRS. First, the conventional notion of deploying IRSs in the vicinity of transceivers is challenged, especially when the direct communication link is blocked. The reasons are as follows: a) The trajectories of equal integrated path loss distances do not conform to uniform circular patterns with communication nodes as their centers. b) Environmental factors contribute to coherence issues, leading to compromised link performance when the IRS is located near transceivers due to the substantial blockage of the direct path. Second, the deployment strategies for IRS can be judiciously designed to achieve optimal utility while concurrently upholding a desirable outage probability. This is particularly relevant in scenarios involving fixed transmitter and receiver positions, such as M2M communication. Notably, the trajectories of equal integrated path loss distances are contingent upon their specific spatial placements, lending an inherent advantage to strategic deployment planning. 

In multi-UE scenarios, such as \cite{zhang2021intelligent,he2023joint}, the IRS deployment strategy can be meticulously designed based on the geometric models. To illustrate the applications of the geometric models in deployment strategies, we introduce the concept of IRS effective zone for a transceiver pair characterized by the IRS location-based long-term performance. 
	\begin{definition} \label{def:IrsSignalZone}
		\emph{\textbf{IRS Effective Zone for a transceiver pair:} To ensure reliable communication between the typical UE and its serving BS, the maximum integrated path loss distance of the cascaded channel via passive IRS is defined as $d_{\rm BIU,max}=\left( \frac{P_{\rm T}N^2\mathbb{E}[G_{\rm BIU}]}{\tau\delta^2} \right)^{\frac{1}{\alpha}}$ to meet the outage signal-to-noise ratio (SNR) threshold, $\tau$, where $\mathbb{E}[G_{\rm BIU}]$ is the average small-scale channel gain of the BS$\rightarrow$IRS$\rightarrow$UE channel. The area enclosed by the trajectory corresponding to $d_{\rm BIU,max}$ is defined as the IRS effective zone for a transceiver pair.}
	\end{definition}
As depicted in Fig.~\ref{fig:multipleCellNetwork}, the IRSs are divided into three types based on their locations. Therefore, in multi-UE scenarios, the association of IRS can be customized to various objective functions. For instance, considering the utility of IRS, we can opt for the Type-II IRS, capable of providing satisfactory performance for both UEs. Alternatively, concerning interference mitigation among UEs, the Type-I IRS may be selected, exclusively serving one UE.

\section{Optimal Placement and Opportunistic Association Policy for Hybrid IRS} \label{sec:assoc}

In this section, we first display the channel power statistics of the integrated IRS-aided cascaded channel, then determine the optimal placement of hybrid IRS, and last present an opportunistic association policy in terms of maximizing the long-term performance of the IRS-aided communication \cite{xia2012cooperative}.

\subsection{Channel Power Statistics}

We assume that the phase shift of all reflecting elements are aligned to enhance the transmitted signals \cite{wu2019intelligent}. Thus, the received signal is given by
\begin{equation}
\begin{split}
    	y &= \left( \mathbf{g}_{\rm BIU}^{\rm pas} + \mathbf{g}_{\rm BIU}^{\rm act}\right) \epsilon d_{\rm BIU}^{-{\alpha}/{2}}  x \\
     &+(\mathbf{g}_{\rm IU}^{\rm act} )^{H}{\bf \Psi}_{\rm act}\sqrt{\epsilon} d_{\rm IU}^{-{\alpha}/{2}}\mathbf{n_{\rm F}}+n_{0},
\end{split}
\end{equation} 
and the received SNR is given by
\begin{equation}
	\mathbb{E}[ {\rm SNR} ] = \mathbb{E} \left[ \frac{P_{\rm T}G_{\rm BIU}d_{\rm BIU}^{-\alpha}(N_{\rm pas}+\eta N_{\rm act})^{2}}{N_{\rm act}\eta^{2}\delta_{\rm F}^{2}G_{\rm IU} d_{\rm IU}^{-\alpha}+\delta^{2} }\right],      
\end{equation}
where $\mathbf{n_{\rm F}}\in\mathbb{C}^{N_{\rm act}\times 1}$ is the thermal noise generated by the active part, which is assumed to follow the Complex Normal distribution, i.e., $\mathbf{n_{\rm F}}\sim\mathcal{CN}(\mathbf{0}_{N_{\rm act}},\delta_{\rm F}^{2}\mathbf{I}_{N_{\rm act}})$, ${n_{0}}$ is the additive white Gaussian noise (AWGN) at the user, i.e., ${n_{0}}\sim\mathcal{N}(0,\delta_{0}^{2})$, and $G_{\rm BIU}$ ($G_{\rm IU}$) is the small-scale channel gain of the BS$\rightarrow$IRS$\rightarrow$UE (IRS$\rightarrow$UE) channel. 

Herein, for expression concise, we assume all reflecting elements suffer the same small-scale fading, and all transmission links follow Nakagami-$m$ fading, i.e., $ |g_{\rm BI}|$ and $ |g_{\rm IU}|$ follow Nakagami-$m$ distributions with parameters $m_{\rm BI}$ and $m_{\rm IU}$, respectively. The analysis can be extended to generalized fading straightforwardly as the small-scale fading is assumed independent of the link distance.  

Given Nakagami-$m$ distributed fading, the channel gain of BS$\rightarrow$IRS and IRS$\rightarrow$UE links can be represented as Gamma distributions with unit mean. The distribution of small-scale channel gain for individual links, i.e., BS$\rightarrow$IRS and IRS$\rightarrow$UE, are given by $G_{\rm BI} \sim \mathcal{MG}\left(\varepsilon_{{\rm BI},i},\beta_{{\rm BI},i},\xi_{{\rm BI},i}\right)$ and $G_{\rm IU} \sim \mathcal{MG}\left(\varepsilon_{{\rm IU},i},\beta_{{\rm IU},i},\xi_{{\rm IU},i}\right)$, where
\begin{equation} \label{eq:MG_para_individual}
	\begin{split}
		& \varepsilon_{{\rm BI},i} = \frac{m_{\rm BI}^{m_{\rm BI}}}{\Gamma(m_{\rm BI})},\quad \beta_{{\rm BI},i}=\xi_{{\rm BI},i}=m_{\rm BI}, \\
		& \varepsilon_{{\rm IU},i} = \frac{m_{\rm IU}^{m_{\rm IU}}}{\Gamma(m_{\rm IU})},\quad \beta_{{\rm IU},i}=\xi_{{\rm IU},i}=m_{\rm IU}.
	\end{split}
\end{equation}

Moreover, the multiplicability of the mixture Gamma distribution, as proven in \cite{10458985}, allows for modeling the integrated channel gain of the cascaded BS$\rightarrow$IRS$\rightarrow$UE channel as a mixture Gamma distribution, denoted as $G_{\rm BIU}\triangleq G_{\rm BI} G_{\rm IU}  \sim \mathcal{MG}(\varepsilon_{{\rm BIU},i},\beta_{{\rm BIU},i},\xi_{{\rm BIU},i})$, and the parameters are given by
\begin{align}\label{eq:MG_para_BIU}
	\begin{split}
		&~ \varepsilon_{{\rm BIU},i} = \frac{(m_{\rm BI}m_{\rm IU})^{m_{\rm BI}}w_{i}t_{i}^{m_{\rm IU}-m_{\rm BI}-1}}{\Gamma(m_{\rm BI})\Gamma(m_{\rm IU})}, \\ &~
		\beta_{{\rm BIU},i} = m_{\rm BI}, \quad
		\xi_{{\rm BIU},i} = \frac{m_{\rm BI}m_{\rm IU}}{t_{i}},
	\end{split}
\end{align} 
where $t_{i}$ is the $i$-th root of the Laguerre polynomial $L_p(t)$, and $w_{i}$ is the $i$-th weight of the Gaussian-Laguerre quadrature $\int_0^\infty e^{-t}f(t)dt \approx \sum_{i=1}^{I} w_{i} f\left(t_i\right)$ that is defined as $w_{i} = \frac{t_i}{\left(p+1\right)^2 L_{p+1}\left(t_i\right)^2}$ \cite{abramowitz1964handbook}.

\subsection{Optimal Locations of the Hybrid IRS}

To determine the optimal locations of the hybrid IRS, taking into account the long-term average received SNR, the problem is formulated as below 
%\begin{small}
\begin{subequations}
\begin{alignat}{2}
	({\rm P}1):~\max_{\substack {d_{\mathrm {BI}}, d_{\mathrm {IU}}} } \,&  \mathbb{E}[ {\rm SNR} ]  \notag\\ 
	\mathrm {s.t.}~~ & P_{\rm T}\geq~0,~  P_{\rm F} ~\geq 0, \label{con:P} \\ 
	& \eta^{2} \leq \frac{P_{\rm F}}{N_{\rm act}(P_{\rm T}G_{\rm BI}d_{\rm BI}^{-\alpha} +\delta_{\rm F}^{2})}, \label{con:eta} \\  
	& P_{\rm F} \geq N_{\rm act}(P_{\rm T}G_{\rm BI}d_{\rm BI}^{-\alpha} +\delta_{\rm F}^{2}) ,\label{con:eta1}  \\ 
	&d_{\rm BIU}  =  d_{\rm BI}\circ d_{\rm IU} ,\label{con:pathL}  \\  
	& d_{\rm BI} \geq 0,\quad d_{\rm IU}  \geq~  0,\label{con:dist} 
 \end{alignat}
\end{subequations}
%\end{small}
where \eqref{con:eta} and \eqref{con:eta1} represent the hardware limit of the active reflecting elements and the power constraint for the amplification factor \cite{amato2018tunneling}. Note that, the inequality in \eqref{con:eta} holds for general cases, including dynamic $d_{\rm BI}$, and the equal symbol holds for associated IRS. Furthermore, \eqref{con:eta1} indicates  $\eta\geq 1$ \cite{kang2023active}. 

The optimization in (P1) is a challenging problem due to the coupled relation between the thermal noise and the amplitude amplification factor. The thermal noise is generated by active parts and the amplification factor depends on the channel condition of the BS$\rightarrow$IRS link, which affects both the received signal power and thermal noise at the UE side. To simplify this problem, we note that (P1) includes the conventional IRS architectures as two special cases. Specifically, if $N_{\rm act}=0$, the hybrid IRS reduces to the conventional passive IRS, and the solution of (P1) is the optimal locations of passive IRS. Otherwise, the object of (P1) is to find the optimal locations of the active IRS when $N_{\rm pas}=0$.

Thereby, we opt to initially partition this problem into two distinct parts, aiming to ascertain the optimal locations for the passive and active sub-surfaces individually. The optimal locations for the passive sub-surface are determined as points along the trajectory possessing the smallest value of $d_{\rm BIU}$, where each $d_{\rm BIU}$ represents a trajectory consisting of potential IRS locations that yield identical performance. In contrast, performance of the active sub-surface varies along this trajectory due to the amplification factor and the additional thermal noise generated by the active parts. 

Hence, our initial focus lies on determining the optimal locations of the active sub-surface on a given trajectory defined by $d_{\rm BIU}$, and the problem is formulated as follows
\begin{alignat}{2}
 	({\rm P}2):~ \max_{\substack { d_{\mathrm {BI}}, d_{\mathrm {IU}}} }& ~ \mathbb{E}[ {\rm SNR} ]^{(\rm act)} \quad 
	 \mathrm {s.t.} \quad \eqref{con:P}-\eqref{con:dist},\notag 
\end{alignat}
where $ \mathbb{E}[ {\rm SNR} ]^{(\rm act)}$ is given by
\begin{equation}
	 \mathbb{E}[ {\rm SNR} ]^{(\rm act)} =  \mathbb{E} \left[ \frac{P_{\rm T}G_{\rm BIU}d_{\rm BIU}^{-\alpha}\eta^{2} N_{\rm act}^{2}}{N_{\rm act}\eta^{2}\delta_{\rm F}^{2}G_{\rm IU} d_{\rm IU}^{-\alpha}+\delta^{2} }\right]. 
\end{equation}
Then, with the maximum value of the amplification factor, $ \mathbb{E}[ {\rm SNR} ]^{(\rm act)}$ can be derived as \eqref{eq:meanSNR_act}, where the mean value of the moments of the small-scale fading can be derived by substituting \eqref{eq:MG_para_individual}, \eqref{eq:MG_para_BIU} and \eqref{eq: MG_moments}, as described below
\footnote{Note that in this study, we employ Nakagami-$m$ fading for expression conciseness. It is worth mentioning that the outcomes derived from our approach can be extended to generalized fading, which is versatile for application across various frequency bands \cite{6059452}.}
%\begin{small}
\begin{figure*}[b!]
	\hrule
\begin{equation} \label{eq:meanSNR_act}
	 \mathbb{E}[ {\rm SNR} ]^{(\rm act)} = \frac{P_{\rm T}P_{\rm F}N_{\rm act}d_{\rm BIU}^{-\alpha} }{P_{\rm F}\delta_{\rm F}^{2}d_{\rm IU}^{-\alpha}\mathbb{E}\left[ {G_{\rm BI}}^{-1} \right]+P_{\rm T}\delta^{2} d_{\rm BI}^{-\alpha}\mathbb{E}\left[{G_{\rm IU}}^{-1}\right]+\delta^{2}\delta_{\rm F}^{2} \mathbb{E}\left[{G_{\rm BIU}}^{-1} \right] },
\end{equation}
\end{figure*}
\begin{equation}
	\begin{split}
		&\mathbb{E}\left[{G_{\rm BI}}^{-1}\right] ~ =   \frac{m_{\rm BI}}{m_{\rm BI}-1} , \quad\mathbb{E}\left[ {G_{\rm IU}}^{-1}\right] ~ = \frac{m_{\rm IU}}{m_{\rm IU}-1}, \\
		&\mathbb{E}\left[ {G_{\rm BIU}}^{-1} \right] = \sum_{i=1}^{I} \frac{m_{\rm BI}m_{\rm IU} w_{i} t_{i}^{m_{\rm IU}-1} }{(m_{\rm BI}-1)\Gamma(m_{\rm IU})}.
	\end{split} 
\end{equation}
%\end{small}
Thus, (P2) can be further simplified as 
%\begin{small}
\begin{align*}
	({\rm P}3):~ \min_{\substack { d_{\mathrm {BI}}, d_{\mathrm {IU}}} }& ~ f = P_{\rm F}\delta_{\rm F}^{2}d_{\rm BI}^{\alpha}\mathbb{E}\left[{G_{\rm BI}}^{-1}\right]+P_{\rm T}\delta^{2} d_{\rm IU}^{\alpha} \mathbb{E}\left[{G_{\rm IU}}^{-1}\right]\\
	\mathrm {s.t.}~&~% d_{\rm BI}\geq~ \left( \frac{N_{\rm act}P_{\rm T}}{P_{\rm F}-N_{\rm act}\delta_{\rm F}^{2}}\right) ^{\frac{1}{\alpha}} \\  &	
	\eqref{con:P},  
	\eqref{con:eta1}-\eqref{con:dist}.\notag 
\end{align*}
%\end{small}
Next, by employing the two path loss laws, the results are achieved as below,\footnote{For the product-distance path loss law, when the power and noise level of both the BS and the active IRS are comparable, the optimal solution lies at the line that is perpendicular to and divides the segment made by BS and UE into two equal halves. This observation is consistent with relay systems. Besides, the feasibility conditions can offer theoretical backing for system design as well.} 
%taking the derivative of $f$ with respect to $d_{\rm BI}^{\alpha}$, the problem is solved, and 
%\begin{small}
\begin{alignat}{2} \label{eq:d_BI_opt}
	d_{\rm BI}^{*} = ~& \rho\sqrt{d_{\rm BIU}}, && \quad \text{for product},  \notag \\
	d_{\rm BI}^{*} = ~& \frac{d_{\rm BIU}}{\varphi_{\rm S}+1},   && \quad \text{for sum}	, 
\end{alignat}
%\end{small}
where $\rho=\left({\frac{P_{\rm T}\delta^{2} \mathbb{E}\left[{G_{\rm IU}}^{-1}\right] }{P_{\rm F}\delta_{\rm F}^{2}\mathbb{E}\left[{G_{\rm BI}}^{-1}\right]}  } \right)^{\frac{1}{2\alpha}}$, and $\varphi_{\rm S}=\rho^{\frac{2\alpha}{1-\alpha}}$. %{\left(\frac{P_{\rm F}\delta_{\rm F}^{2}}{P_{\rm T}\delta^{2}}\right)^{\frac{1}{\alpha-1}}}$.
Thereby, the corresponding distance between IRS and UE can be readily given by
\begin{alignat}{2}  \label{eq:d_IU_opt}
	d_{\rm IU}^{*} = ~&  \rho^{-1}\sqrt{d_{\rm BIU}},  &&\quad \text{for product}, \notag\\
	d_{\rm IU}^{*} = ~& d_{\rm BIU}-(d_{\rm BI})^{*}, &&  \quad \text{for sum}	. 
\end{alignat}

It can be verified that (P3) is feasible if and only if the following constraints of system parameters are satisfied \cite{you2021wireless}.
% \begin{small}
\begin{alignat}{2}  \label{eq:etaGeq1}
	&\rho~\geq \left(\frac{N_{\rm act}P_{\rm T}}{P_{\rm F}-N_{\rm act}\delta_{\rm F}^{2}} \right) ^{\frac{1}{\alpha}}d_{\rm BIU,P}^{-\frac{1}{2}},  &&\quad \text{for product}, \notag\\
	&\varphi_{\rm S}\leq \left(\frac{N_{\rm act}P_{\rm T}}{P_{\rm F}-N_{\rm act}\delta_{\rm F}^{2}} \right) ^{-\frac{1}{\alpha}}d_{\rm BIU,S}-1, &&  \quad \text{for sum}	. 
\end{alignat}
%\end{small}
As illustrated in Fig.~\ref{fig:cassiniOval}, the feasibility of the local optimal points for product-distance path loss law in \eqref{eq:d_BI_opt} varies along ${\rm e}_{\rm P}$, due to the various trajectory shapes. Specifically, the optimal points for product-distance path loss law on each trajectory are available if and only if the system parameters are designed based on the following inequality
\begin{equation} \label{eq:feasibility}
    \rho+\frac{1}{\rho} \geq \frac{2}{{\rm e}_{\rm P}}.   
\end{equation}
If ${\rm e}_{\rm P}\geq1$, the inequality in \eqref{eq:feasibility} is always satisfied and the results in \eqref{eq:d_BI_opt} are robust. However, if ${\rm e}_{\rm P}<1$, the value of $d_{\rm BI}^{*}$ may be infeasible in certain parameter configurations, i.e., $\rho+{1}/{\rho} < 2/{\rm e}_{\rm P}$. Then, the minimum $d_{\rm BI}$ on each trajectory is sub-optimal, considering that the function $f$ is convex and active IRS is supposed to be deployed near UE instead of BS due to practical operation power of BS and IRS \cite{you2021wireless}. Intuitively, as ${\rm e}_{\rm P}$$\rightarrow$0, the IRS is deployed in the vicinity of transceivers and active IRS should operate at lower power mode. 
\begin{remark}
	\emph{The optimal locations of the active sub-surface along a predetermined trajectory with a fixed $d_{\rm BIU}$ rely on $\rho$ and $\alpha$. This finding is consistent with \cite{you2021wireless}, where the active IRS is confined to the segment between UE and BS with limited height. }
\end{remark}

In the following, we assume that the system is designed based on \eqref{eq:etaGeq1} and \eqref{eq:feasibility}. With the results in \eqref{eq:d_BI_opt}, \eqref{eq:d_IU_opt} and the geometric models studied in Section \ref{sec:geo}, the optimal locations of the active sub-surface are derived in the following lemma.

\begin{lemma}\label{lem:localOp}
	When the typical UE and its serving BS are located at $(\pm d_{\rm BU}/2,0)$, the coordinates of the optimal locations of the active sub-surface are given by
    %\begin{small}
	\begin{alignat}{2}\label{eq:cood}
		x_{1}^{*} = & \frac{\varphi_{\rm P}^{2}-d_{\rm BIU}^{2}}{2d_{\rm BU}\varphi_{\rm P}}, ~ &y_{1}^{*} = & \pm \sqrt{ \varphi_{\rm P} - (x^{*}+c)^2 }, \notag \\ x_{2}^{*} = & \frac{d_{\rm BU}^{2}-d_{\rm BIU}^{2}\left(1-\frac{2}{\varphi_{\rm S}}\right)}{2d_{\rm BU}}, ~
		&y_{2}^{*} = & \pm \sqrt{\left(\frac{d_{\rm BIU}}{\varphi_{\rm S}+1}\right)^2-(x^{*})^2}, 
\end{alignat}
%\end{small}
where $\varphi_{\rm P}=\rho^{2}d_{\rm BIU}$, $(x_{1}^{*},y_{1}^{*})$ and $(x_{2}^{*},y_{2}^{*})$ are the optimal coordinates of the active sub-surface for the product- and sum-distance path loss laws, respectively.
\end{lemma}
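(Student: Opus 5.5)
The plan is to read the optimal active sub-surface location as the intersection of two circles. One circle is centred at the BS and the other at the typical UE. Their radii are the optimal link distances $d_{\rm BI}^{*}$ and $d_{\rm IU}^{*}$ from \eqref{eq:d_BI_opt} and \eqref{eq:d_IU_opt}. By construction the intersection points lie on the trajectory with the prescribed $d_{\rm BIU}$ (the Cassini oval of Theorem~\ref{theorem:cassiniOval} or the ellipse of Proposition~\ref{pro:ellipse}). So the task reduces to elementary coordinate geometry once we fix $c=d_{\rm BU}/2$, with the BS at $(-c,0)$ and the UE at $(c,0)$.

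\emph{Product-distance law.} Set $\varphi_{\rm P}=\rho^{2}d_{\rm BIU}$. Then $(d_{\rm BI}^{*})^{2}=\varphi_{\rm P}$ and $(d_{\rm IU}^{*})^{2}=d_{\rm BIU}/\rho^{2}=d_{\rm BIU}^{2}/\varphi_{\rm P}$. The two circles are $(x+c)^{2}+y^{2}=\varphi_{\rm P}$ and $(x-c)^{2}+y^{2}=d_{\rm BIU}^{2}/\varphi_{\rm P}$. Subtracting the second from the first removes $y^{2}$ and gives the linear relation $4cx=\varphi_{\rm P}-d_{\rm BIU}^{2}/\varphi_{\rm P}$. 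With $4c=2d_{\rm BU}$ this is exactly $x_{1}^{*}=(\varphi_{\rm P}^{2}-d_{\rm BIU}^{2})/(2d_{\rm BU}\varphi_{\rm P})$. Substituting back into the BS circle gives $y_{1}^{*}=\pm\sqrt{\varphi_{\rm P}-(x_{1}^{*}+c)^{2}}$. The two signs are the two mirror-symmetric points above and below the BS--UE axis.

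\emph{Sum-distance law.} Here $d_{\rm BI}^{*}=d_{\rm BIU}/(\varphi_{\rm S}+1)$ and $d_{\rm IU}^{*}=d_{\rm BIU}\varphi_{\rm S}/(\varphi_{\rm S}+1)$. Subtracting the circle equations again gives a linear equation in $x$ with the same left-hand side $4cx$. Its right-hand side is $(d_{\rm BI}^{*})^{2}-(d_{\rm IU}^{*})^{2}=d_{\rm BIU}^{2}(1-\varphi_{\rm S})/(1+\varphi_{\rm S})$. The $y$-coordinate then follows from whichever circle is used for back-substitution, and the $\pm$ sign again reflects the symmetry about the $x$-axis.

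\emph{Main obstacle.} The step I expect to be hardest is matching the algebraic form of $(x_{2}^{*},y_{2}^{*})$ displayed in \eqref{eq:cood}. Under the symmetric-foci convention, the radical would naturally read $(d_{\rm BI}^{*})^{2}-(x+c)^{2}$ rather than $(d_{\rm BI}^{*})^{2}-x^{2}$. The bare $x^{2}$ suggests the sum-law coordinates are measured with the BS (or another reference) at the origin. I would first try re-deriving the linear equation with the BS at the origin and the UE at $(d_{\rm BU},0)$, and then simplify using the definition $\varphi_{\rm S}=\rho^{2\alpha/(1-\alpha)}$, to see which convention reproduces the displayed expression. If no convention reproduces it, I would present the intersection formula explicitly in the stated frame and flag the discrepancy.

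\emph{Existence.} Finally, I would check that the radicands are nonnegative, so that the circles actually intersect. This amounts to the triangle inequality $|d_{\rm BI}^{*}-d_{\rm IU}^{*}|\le d_{\rm BU}\le d_{\rm BI}^{*}+d_{\rm IU}^{*}$.
\begin{itemize}
\item Product law: the upper inequality is the AM--GM-type condition \eqref{eq:feasibility}, $\rho+1/\rho\ge 2/{\rm e}_{\rm P}$.
\item Sum law: the upper inequality holds because $d_{\rm BI}^{*}+d_{\rm IU}^{*}=d_{\rm BIU,S}\ge 2c=d_{\rm BU}$ whenever the ellipse is nondegenerate.
\end{itemize}
The lower inequality $|d_{\rm BI}^{*}-d_{\rm IU}^{*}|\le d_{\rm BU}$ is not among the paper's stated assumptions \eqref{eq:etaGeq1} and \eqref{eq:feasibility}, so it must be checked separately. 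If it fails, the circles do not meet, the radicand is negative, and the displayed $y^{*}$ is not real; the lemma would then hold only under this extra condition.
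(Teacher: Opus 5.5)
Your proof is correct and is essentially the paper's argument. Appendix~\ref{append:localOp} intersects the BS-centred circle $(x+c)^2+y^2=\varphi_{\rm P}$ with the Cassini oval \eqref{eq:cassiniOval}, where after substitution the quartic terms cancel and leave the same linear equation in $x$. You instead intersect that circle with the UE-centred circle, which is equivalent because on the oval $d_{\rm IU}=d_{\rm BIU}/d_{\rm BI}$.

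For the sum law the paper only says ``following similar procedure'', and your suspicion is right. Placing the BS at the origin reproduces $y_2^{*}$ exactly but gives $x_2^{*}=\bigl(d_{\rm BU}^{2}-d_{\rm BIU}^{2}(1-\frac{2}{\varphi_{\rm S}+1})\bigr)/(2d_{\rm BU})$. So the display mixes coordinate frames and has a typo ($\varphi_{\rm S}$ in place of $\varphi_{\rm S}+1$). Your lower condition $|d_{\rm BI}^{*}-d_{\rm IU}^{*}|\le d_{\rm BU}$ is also a genuine extra requirement that the paper's \eqref{eq:feasibility} omits.
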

\begin{proof}
	See Appendix~\ref{append:localOp}.
\end{proof}

However, the results in Lemma~\ref{lem:localOp} are local solutions for a predetermined $d_{\rm BIU}$. Subsequently, we proceed to acquire the globally optimal positions of active sub-surface by searching all the local optimal solutions across varying $d_{\rm BIU}$, the problem of determining the globally optimal solutions for (P2) is formulated as\footnote{Because the small-scale fading has been incorporated in the optimal solutions in \eqref{eq:d_BI_opt} and \eqref{eq:d_IU_opt} as a scaling factor of the path loss, the following objective functions omit the small-scale fading parameter \cite{chun2017comprehensive}.} 
\begin{equation*}
	\begin{split}
		({\rm P}4):~ \max_{\substack {d_{\mathrm {BIU}}} }& ~ \frac{P_{\rm T} P_{\rm F}  N_{\rm act}^{2} d_{\mathrm {BIU}}^{-\alpha} }{P_{\rm F} \delta_{\rm F}^{2} (d_{\rm IU}^{*})^{-\alpha} + P_{\rm T} \delta^{2} (d_{\rm BI}^{*})^{-\alpha} +\delta^{2} \delta_{\rm F}^{2}  }  \\
		\mathrm {s.t.}&~   \eqref{con:P}, \eqref{con:pathL} -\eqref{con:dist},\eqref{eq:d_BI_opt},\eqref{eq:d_IU_opt}. \notag  
	\end{split}
\end{equation*}

Next, we applied the results in \eqref{eq:d_BI_opt} and \eqref{eq:d_IU_opt} to (P4), which means the optimal locations of IRS for each determined $d_{\rm BIU}$ are selected. When the product-distance path loss law is adopted, (P4) can be simplified to 
\begin{equation*}
	\begin{split}
		({\rm P}5):~ \max_{\substack {d_{\mathrm {BIU}}} }& ~  \frac{P_{\rm T} P_{\rm F}  N_{\rm act} }{2\sqrt{ P_{\rm F}P_{\rm T}  \delta_{\rm F}^{2} \delta^{2}  } d_{\mathrm {BIU}}^{\frac{\alpha}{2}}   + \delta^{2} \delta_{\rm F}^{2} d_{\mathrm {BIU}}^{\alpha} } .
	\end{split}
\end{equation*}
When the sum-distance path loss law is adopted, (P4) can be simplified to 
\begin{equation*}
	({\rm P}6):~ \max_{\substack {d_{\mathrm {BIU}}} } ~\frac{P_{\rm T} P_{\rm F}  N_{\rm act}}{ P_{\rm F}\delta_{\rm F}^{2}\left( \frac{\varphi_{\rm S}}{\varphi_{\rm S}+1} \right) ^{-\alpha} +  P_{\rm T} \delta^{2} \left( \frac{1}{\varphi_{\rm S}+1} \right) ^{-\alpha}  +  \delta_{\rm F}^{2}d_{\rm BIU}^{\alpha} }.
\end{equation*}

The singularity of both (P5) and (P6) with respect to $d_{\rm BIU}$ clearly indicates that the optimal solutions exist at the minimum value of $d_{\rm BIU}$ while still satisfying the condition for optimal placement of the passive sub-surface. Therefore, the optimal locations of the hybrid IRS are given by 
\begin{alignat}{2}  \label{eq:hybrid_IRS_opt}
	d_{\rm BI}^{**} = & \rho\sqrt{d_{\rm BIU,min}}, ~ d_{\rm IU}^{**} =  \rho^{-1}\sqrt{d_{\rm BIU,min}},   &&   \text{for product},  \notag \\
	d_{\rm BI}^{**} = & \frac{d_{\rm BIU,min}}{\varphi_{\rm S}+1}, ~ \quad ~ d_{\rm IU}^{**} =  d_{\rm BIU,min}-{d_{\rm BI}}^{**}, &&  ~ \text{for sum}.
\end{alignat}

\begin{remark}
	\emph{The optimal locations for passive, active, and hybrid IRS under both sum- and product-distance path loss laws are determined by the integrated path loss distance of the cascaded BS$\rightarrow$IRS$\rightarrow$UE channel, small-scale fading of each link, and network parameters, such as $P_{\rm T}$, $P_{\rm F}$, $\delta^{2}$, $\delta_{\rm F}^{2}$, and $\alpha$.}
\end{remark}

\subsection{An Opportunistic Association Policy for the Hybrid IRS}

Based on the obtained optimal placement, we propose a two-step integrated path loss distance-based opportunistic association policy for the hybrid IRS. Assuming that the hybrid IRSs are densely deployed throughout the entire area, and the optimally located hybrid IRSs on the trajectories for all $d_{\rm BIU}$ are available, whose Cartesian coordinates are given by Lemma \ref{lem:localOp}. The association procedures are defined as follows:
\begin{enumerate}
	\item Determine the hybrid IRSs with the minimum integrated path loss distance of the cascaded channel, $d_{\rm BIU, min}$, according to the practical scenario.
	\item Select the hybrid IRS located at the optimal locations along the trajectory defined by $d_{\rm BIU, min}$ based on the results in Lemma \ref{lem:localOp}.
\end{enumerate}

In practical scenarios, the locations of BSs and IRSs are typically static. Consequently, when a UE connects to the network, it can report its location information to the BSs, and then the BSs have the capability to generate a lookup table for the UE, presenting both the index and the integrated path loss distance pertaining to potential IRSs. As such, both the BS and UE can determine the IRS with minimum integrated path loss distance.

\section{Performance Analysis} \label{sec:specialCase}

When the serving IRS is positioned optimally, the average received SNR of hybrid IRS-aided communication is expressed as
\begin{equation}\label{eq:outP_hy}
	\mathbb{E}[{\rm SNR}] = \frac{P_{\rm T}d_{\rm BIU,min}^{-\alpha}(N_{\rm pas}+\eta^{**} N_{\rm act})^{2}}{N_{\rm act}(\eta^{**})^{2}\delta_{\rm F}^{2} (d_{\rm IU}^{**})^{-\alpha} +\delta^{2} },
\end{equation}
where the optimal amplification factor is obtained by substituting \eqref{eq:hybrid_IRS_opt} into (P1), as shown below
\begin{equation}\label{eq:eta_opt}
	\begin{split}
		(\eta^{**})^2  = \begin{cases}  \frac{P_{\rm F}}{N_{\rm act}\left(P_{\rm T}\rho^{-\alpha}d_{\rm BIU,min}^{-\frac{\alpha}{2}}+\delta_{\rm F}^{2}\right)}, & \text {for product},  \\ \frac{P_{\rm F}}{N_{\rm act}\left(P_{\rm T}\left(\frac{d_{\rm BIU,min}}{\varphi_{\rm S}+1}\right)^{-\alpha}+\delta_{\rm F}^{2}\right)}, & \text{for sum }. \end{cases}
	\end{split}
\end{equation}
From \eqref{eq:outP_hy} and \eqref{eq:eta_opt}, it is observed that the distinction between hybrid and passive IRS-aided communication analysis lies in the amplification factor and the thermal noise introduced by the active parts. In addition, with the assistance of the geometric models, the average received SNR can be expressed as a single variable function with respect to $d_{\rm BIU}$. 

However, the coupling between the numerator and denominator of SNR resulting from the active sub-surface still leads to intractable analysis. Due to page limits, we leave this into our future work, and focus on exploring benefits of the geometric models and opportunistic association policy introduced in this work. To this end, we study a special case in the subsequent analysis, where $N_{\rm act}=0$ and the hybrid IRS is transformed into a passive IRS.\footnote{We will investigate the network-level performance of active sub-surface aided systems in our upcoming research.}

It is noteworthy that the existing research on the analysis of passive IRS-aided networks is still constrained by the challenge of modeling the integrated IRS-aided cascaded channel \cite{lyu2021hybrid,10458985}. While the integrated small-scale fading has been accurately and tractably modeled as a mixture Gamma distribution in \cite{10458985}, the precise modeling of the integrated path loss distance remains unsolved. The geometric models examined in this work effectively address this issue and facilitate the modeling of the entire integrated IRS-aided cascaded channel, employing an opportunistic association policy. 

We study two cases in this section, namely communications assisted by a single IRS or multiple IRSs. We consider an edge-user as the typical UE and assume that the direct link between the typical UE and its serving BS is blocked due to a long transmission distance.

\subsection{Single IRS-Aided Network}
In this subsection, we will delve into single IRS-aided communication, where the signal received by the typical UE from the connected IRS is 
\begin{equation}
	y =  {\epsilon}d_{\rm BIU}^{-{\alpha}/{2}} {\mathbf{g}}_{{\rm BIU}}x +n_{0},
\end{equation} 
where $x$ is the transmitted signal with power $P_{\rm T}$. Then, the SNR is given by
\begin{equation}
	{\rm SNR} = \frac{P_{\rm T}N^2\mathbb{E}\left[G_{\rm BIU}\right]d_{\rm BIU}^{-\alpha}}{\delta^{2}}.
\end{equation}

 \subsubsection{Outage Probability} \label{sec:nonOutageZone}
The outage probability is given by
%\begin{small}
\begin{equation} \label{eq:outagrP_single}
	\mathcal{P}_{\rm out} =  \mathcal{P}\left( {\rm SNR} < \tau \right)= \mathcal{P}\left(d_{\rm BIU}>\left(  \frac{P_{\rm T} N^2 \mathbb{E}\left[G_{\rm BIU}\right] }{\delta^{2}\tau}\right)^{\frac{1}{\alpha}}\right).
\end{equation}
Then, by substituting the CDF expression of the integrated path loss distance under different path loss laws, \eqref{eq:outagrP_single} can be resolved.
When the sum-distance path loss law is adopted, the outage probability is given by
\begin{equation} 
	\mathcal{P}_{\rm out} = 1 - \frac{\pi}{4S(d_{\rm t})} \varrho\tau^{-\frac{1}{\alpha}}\sqrt{\varrho^{2}\tau^{-\frac{2}{\alpha}}-4c^{2}},
\end{equation}
where $\varrho=\left(  \frac{P_{\rm T}N^2 \mathbb{E}\left[G_{\rm BIU}\right]}{\delta^{2}}\right)^{\frac{1}{\alpha}}$.
When the product-distance path loss law is adopted, the outage probability is given by
%\begin{small}
\begin{equation}
	\mathcal{P}_{\rm out} =
	\begin{cases}
		1 - \frac{2c^2}{S(d_{\rm t})}\left[E\left(\varpi\right)-\left(1-\varpi^{2}\right)K\left(\varpi\right)\right],  & \text{for } {\rm e}_{\rm P} < 1 \\ 
		1 - \frac{2\varpi c^{2}}{S(d_{\rm t})}E\left(\varpi^{-1}\right), & \text{for } {\rm e}_{\rm P}\geq 1 
	\end{cases},
\end{equation}
%\end{small}
where $\varpi=\frac{\varrho}{c^2\tau^{\frac{1}{\alpha}}}$.

\subsubsection{Low-Rate Probability}
The low-rate probability is defined as the probability that the rate is lower than the threshold, $r_{0}$, which is given by
%\begin{small}
\begin{equation} \label{eq:rateP_single}
	\begin{split}
		\mathcal{P}_{\rm R} =  &	\mathcal{P} \left( \ln(1+{\rm SNR}) < r_{0} \right) \\ %= \mathbb{P}\left( \ln \left[ 1+ \frac{P_{\rm T}N^2}{\delta^{2}}\mathbb{E}\left[G_{\rm BIU}\right]d_{\rm BIU}^{-\alpha} \right] < r_{0} \right)  
		= & \mathcal{P}\left( d_{\rm BIU} > \left[\frac{P_{\rm T}N^2 \mathbb{E}\left[G_{\rm BIU}\right] }{\delta^{2}(e^{r_{0}}-1)} \right]^{\frac{1}{\alpha}} \right),
	\end{split}
\end{equation}
%\end{small}
which can be expressed by the CDF of the integrated path loss distance as well.

\subsubsection{Relationship Between Outage Probability and Low-Rate Probability}
By comparing \eqref{eq:outagrP_single} and \eqref{eq:rateP_single}, the low-rate probability can be expressed as a function of the outage probability
\begin{equation}\label{eq:outagePrateP}
	\mathcal{P}_{\rm R} = \overline{F}\left[\left(\frac{\tau}{e^{r_{0}}-1} \right)^{\frac{1}{\alpha}} \overline{F}^{-1}(\mathcal{P}_{\rm out}) \right],
\end{equation}
where $\overline{F}(\cdot)=1-F(\cdot)$, and $F(\cdot)$ is the CDF of the integrated path loss distance, given in \eqref{eq:cassiniCDF} and \eqref{eq:ellipseCDF}.

\subsection{Multiple IRSs-Aided Communication}

In this subsection, we study the multiple IRSs-aided communication, where one reference cascaded link is selected, and the other IRSs align their phase shift with the reference link to corporately enhance the communication between transceivers.

\subsubsection{Association Policy}
To ensure effective operation within networks assisted by multiple IRSs, it is crucial to establish a suitable association policy for the IRSs. 

As shown in Definition~\ref{def:IrsSignalZone}, the typical UE associates to its serving BS through IRSs positioned within the IRS effective zone for ${\rm UE}_{0}$ and ${\rm BS}_{0}$, as stated below.
\begin{equation}
	\begin{split}
		\text{${\rm UE}_{0}$ associates with ${\rm BS}_{0}$ through a group of IRSs } \{{\rm IRS}_{l}\} \\
		\Leftrightarrow  \{{\rm IRS}_{l}\} = \arg_{{{\rm IRS}_l}\in \mathcal{J}} d_{{\rm BIU},l}\leq d_{\rm BIU,max},	\notag
	\end{split}
\end{equation}
where $l=1,\cdots,L$, $L$ is the total number of the connected IRSs. As the IRSs are uniformly distributed throughout the whole area, the number of IRSs located within the IRS effective zone is given by
\begin{equation}
	L = \lambda_{\rm I} \cdot S(d_{\rm BIU,max}) = \lambda_{\rm I} \cdot S\left(\varrho\tau^{-\frac{1}{\alpha}}\right), 
\end{equation}
where $\lambda_{\rm I}={J}/{S(d_{\rm t})}$ is the density of IRSs, $S(d_{\rm BIU,max})$ is the area of the IRS effective zone under the threshold, $\tau$, which depends on the selected path loss model, as shown in \eqref{eq:areaOfCassiniOval} and \eqref{eq:areaOfEllipse}. 

Thus, the received signal at the typical UE is given by 
%\begin{small}
\begin{equation}
	y = \sum_{l=1}^{L} {h}_{{\rm BIU},l}  x + n_{0},
\end{equation}
%\end{small}
where $ {h}_{{\rm BIU},l}, l=1,\cdots,L,$ is the channel of the $l$-th cascaded BS$\rightarrow$${\rm IRS}_{l}$$\rightarrow$UE link. 

When the serving IRSs are located at the same equal gain trajectory, the average received signal power is given by
%\begin{small}
\begin{equation} \label{eq:sameLineIrs}
	\mathbb{E}[P_{\rm r}]  =  L^{2}N^{2} P_{\rm T} \epsilon^{2} d_{\rm BIU}^{-{\alpha}}\mathbb{E}[G_{\rm BIU}]. 
\end{equation}
%\end{small}
Then, the average received SNR is given by
%\begin{small}
\begin{equation}\label{eq:SNR_co}
	\mathbb{E}[{\rm SNR}]  =   L^{2} \mathbb{E}[{\rm SNR}_{l}], \quad \mathbb{E}[{\rm SNR}_{l}] = N^{2} \frac{P_{\rm T} \epsilon^{2} d_{\rm BIU}^{-{\alpha}}\mathbb{E}[G_{\rm BIU}]}{\delta^{2}}.\notag
\end{equation}
%\end{small}
It is observed that the phase shift alignment among multiple associated IRSs provides a SNR gain in order of $O(L^{2}N^{2})$, which is consistent with a centralized IRS with $LN$ reflecting elements.

\subsubsection{Channel Statistics}

To evaluate the performance metrics, we focus on the mixture channel coefficient $|{h}_{\rm S}| \triangleq |\sum_{l=1}^{L} {h}_{{\rm BIU},l}|$. 
\begin{proposition}\label{prop:channelModel}
	Given each single link follows generalized fading distribution, the amplitude of each cascaded channel, $|{ h}_{{\rm BIU},l}|$, can be modeled as a mixture Nakagami-$m$ distribution, whose PDF is given by
	\begin{equation} \label{append:mixtureNaka}
		f_{\rm X}(x) = \sum_{i=1}^{I} 2\varepsilon_{i}x^{2\beta_{i}-1}e^{-\xi_{i}x^{2}},
	\end{equation} 
	where $ I $ is the number of Nakagami-$m$ terms, which is set as $17$ to achieve an error less than $10^{-5}$, $ \{ \varepsilon_{i}, \beta_{i}, \xi_{i} \}$ is the parameters of each term \cite{abramowitz1964handbook}. 
\end{proposition}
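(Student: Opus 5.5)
The plan is to obtain the amplitude law from the power law already available in the paper, using a change of variables. The power gain of each cascaded link is $G_{{\rm BIU},l}=|h_{{\rm BIU},l}|^{2}$, normalized by the deterministic path-loss factor $\epsilon^{2}d_{\rm BIU}^{-\alpha}$. Under generalized fading on each individual link, Definition~\ref{def:MG} and the result of \cite{6059452} give $G_{\rm BI}\sim\mathcal{MG}$ and $G_{\rm IU}\sim\mathcal{MG}$. The multiplicability property of \cite{10458985}, used in \eqref{eq:MG_para_BIU}, then gives $G_{\rm BIU}=G_{\rm BI}G_{\rm IU}\sim\mathcal{MG}(\varepsilon_i,\beta_i,\xi_i)$. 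In the Nakagami-$m$ case these parameters are computed by Gauss--Laguerre quadrature, and the general case follows by expanding the double mixture.

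\textbf{Key step.} Set $X=|h_{{\rm BIU},l}|=\sqrt{G}$. Since $x\mapsto x^{2}$ is monotone on $[0,\infty)$,
\[
f_X(x)=2x\,f_G(x^{2})=2x\sum_{i=1}^{I}\varepsilon_i x^{2(\beta_i-1)}e^{-\xi_i x^{2}}=\sum_{i=1}^{I}2\varepsilon_i x^{2\beta_i-1}e^{-\xi_i x^{2}},
\]
which is exactly \eqref{append:mixtureNaka}. Each term is, up to the weight $\varepsilon_i\Gamma(\beta_i)/\xi_i^{\beta_i}$, a Nakagami-$m$ density with shape $\beta_i$ and spread $\beta_i/\xi_i$. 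This justifies calling it a mixture Nakagami-$m$ distribution.

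\textbf{Deterministic scaling.} The factor $\epsilon d_{\rm BIU}^{-\alpha/2}$, together with the coherent gain $N$ from aligned phases, is handled by a second change of variables $x\mapsto x/a$. This keeps the mixture-Nakagami form and only rescales $\varepsilon_i\mapsto\varepsilon_i a^{-2\beta_i}$ and $\xi_i\mapsto\xi_i a^{-2}$. Normalization carries over directly: $\int f_X=\int f_G=1$.

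\textbf{Main obstacle.} The difficulty is justifying the truncation level $I$ and the accuracy claim of $10^{-5}$ with $I=17$. The plan is to inherit the Gauss--Laguerre quadrature error bound from \cite{abramowitz1964handbook}, which is the one underlying \eqref{eq:MG_para_BIU}. The square-root map does not enlarge the error in CDF (Kolmogorov) distance, because $F_X(x)=F_G(x^{2})$ holds exactly. So the approximation error of the power mixture transfers unchanged to the amplitude mixture, and the value of $I$ is inherited from the power model.
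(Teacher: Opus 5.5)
Your proposal is correct and follows the same route as the paper's proof in Appendix~\ref{append:mixNaka}: it takes the mixture-Gamma model of the cascaded power gain from \cite{10458985} and applies the change of variables $f_X(x)=2x f_G(x^2)$ with $X=\sqrt{G}$. Your additional remarks, which the paper does not spell out, are all valid: each term is a weighted Nakagami-$m$ density, the form is preserved under the deterministic rescaling, and the Kolmogorov distance is invariant under $x\mapsto x^2$. Neither your argument nor the paper establishes the specific value $I=17$ for an error below $10^{-5}$, which remains an unproved numerical claim.
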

\begin{proof}
	See Appendix~\ref{append:mixNaka}.
\end{proof}

\begin{lemma}\label{lem:coefficient}
		The amplitude of the mixture channel coefficient can be modeled as a Gaussian distributed random variable, denoted as $|{ h}_{\rm S}|\sim \mathcal{N}(\mu,\delta_{\rm S})$, where $\mu$ and $\delta_{\rm S}$ are given by 
	\begin{equation}
		\mu  =  \sum_{l=1}^{L} \mu_{l}, \quad \delta_{\rm S}^{2}  = L - \sum_{l=0}^{L} \mu_{l} ^{2} ,
	\end{equation}
	%where 
	\begin{equation}\label{eq:mu_l}
		\mu_{l} =  \sum_{i=1}^{I} \varepsilon_{i}^{(l)}\left( \beta_{i}^{(l)}-\frac{1}{2}\right) ! {\xi_{i}^{(l)}}^{-\beta_{i}-\frac{1}{2}} , l = 1,\cdots,L,
	\end{equation}
	%$ \mu_{l}=\sum_{i=1}^{I} \varepsilon_{i}(\beta_{i}-\frac{1}{2})! \xi_{i}^{-\beta_{i}-\frac{1}{2}} $,
        and $\{ \varepsilon_{i}^{(l)}, \beta_{i}^{(l)}, \xi_{i}^{(l)} \}$ is the distribution parameters of the channel coefficient of the $l$-th serving IRS-aided cascaded link ($\text{BS}_{0}\rightarrow\text{IRS}_{l}\rightarrow\text{UE}_{0}$) \cite{abramowitz1964handbook}.
\end{lemma}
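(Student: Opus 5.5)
The plan is to combine a central limit theorem (CLT) argument with moment matching, building on Proposition~\ref{prop:channelModel}. By that proposition, each amplitude $X_l \triangleq |h_{{\rm BIU},l}|$ has the mixture Nakagami-$m$ density \eqref{append:mixtureNaka} with parameters $\{\varepsilon_i^{(l)},\beta_i^{(l)},\xi_i^{(l)}\}$. All other IRSs align their phases with the reference link, so the $L$ cascaded contributions add coherently. Hence $|h_{\rm S}| = \sum_{l=1}^{L} X_l$, a sum of independent nonnegative random variables: independent because the IRSs are at distinct locations with independent fading.

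The first step is to compute the first two moments of each $X_l$ term by term from \eqref{append:mixtureNaka}. For the mean, substitute $u = x^2$:
\begin{equation*}
\mathbb{E}[X_l] = \sum_{i=1}^{I} 2\varepsilon_i^{(l)} \int_0^\infty x^{2\beta_i^{(l)}} e^{-\xi_i^{(l)} x^2}\,dx = \sum_{i=1}^{I} \varepsilon_i^{(l)}\, \Gamma\!\left(\beta_i^{(l)}+\tfrac12\right) \left(\xi_i^{(l)}\right)^{-\beta_i^{(l)}-\frac12}.
\end{equation*}
Writing $\Gamma(\beta+\frac12)=(\beta-\frac12)!$ gives exactly \eqref{eq:mu_l}. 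For the second moment, $\mathbb{E}[X_l^2] = \sum_i \varepsilon_i^{(l)}\Gamma(\beta_i^{(l)}+1)(\xi_i^{(l)})^{-\beta_i^{(l)}-1}$, which is the mean channel power of the cascaded link. Since the fading is normalized to unit mean power, as in \eqref{eq:MG_para_individual} where each Gamma gain has unit mean, this equals $1$. It follows that ${\rm Var}[X_l] = 1-\mu_l^2$.

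The second step uses independence. The mean of the sum is $\mu=\sum_l \mu_l$, and the variance is $\delta_{\rm S}^2=\sum_l (1-\mu_l^2) = L-\sum_l \mu_l^2$, which matches the statement. The index starting at $l=0$ in the statement should be read as $l=1$, or as $\mu_0=0$.

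The third step is the Gaussian approximation. The $X_l$ are independent but not identically distributed, because their mixture parameters depend on position through the cascaded link. I would therefore invoke the Lyapunov CLT. It suffices to check that the third absolute moments are uniformly bounded, which holds since each mixture has finitely many Gamma-type terms and $\mathbb{E}[X_l^3]$ is again a finite gamma-function sum. It also requires the variances to be bounded away from zero, so that $\sum_l \mathbb{E}|X_l-\mu_l|^3 / \delta_{\rm S}^3 = O(L^{-1/2}) \to 0$. Then $(|h_{\rm S}|-\mu)/\delta_{\rm S}$ converges to $\mathcal{N}(0,1)$, which justifies modeling $|h_{\rm S}|\sim\mathcal{N}(\mu,\delta_{\rm S})$.

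The main obstacle is that the result is asymptotic in $L$. For small $L$ it is only a moment-matched approximation and not an exact law, and a nonnegative variable can never be exactly Gaussian. I would state it as a large-$L$ approximation, with $L=\lambda_{\rm I}S(d_{\rm BIU,max})$ large under dense deployment, and confirm its accuracy numerically. A secondary point is to make the unit-power normalization $\mathbb{E}[X_l^2]=1$ explicit, since the variance formula depends on it.
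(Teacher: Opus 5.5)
Your proposal is correct and follows essentially the same route as the paper. The paper obtains $\mu_l$ through the substitution $t=x^2$ and the gamma integral. It then gets the variance from unit mean power per cascaded link and independence by expanding $\mathbb{E}[|h_{\rm S}|^2]$ as a double sum, which is equivalent to your sum of variances $\sum_l(1-\mu_l^2)$. The Gaussian form is likewise taken as a CLT approximation, and your explicit Lyapunov check and the normalization $\mathbb{E}[X_l^2]=1$ just make precise what the paper leaves implicit.
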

\begin{proof}
See Appendix~\ref{append:CLT_h}.
\end{proof}

\begin{proposition}
	When the serving IRSs are located on the equal trajectory with identical $d_{\rm BIU}$, and each IRS-aided channel suffers the same small-scale fading, the received signal power can be modeled as $|{h}_{\rm S}|\sim \mathcal{N}(\mu,\delta_{\rm S}^2)$, where $\mu$ and $\delta_{\rm S}^2$ are 
	\begin{equation}
		\mu  =  {L} \mu_{l}, \quad \delta_{\rm S}^{2}  =  L - L \mu_{l} ^{2}.
	\end{equation} 
 \end{proposition}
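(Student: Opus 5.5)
This follows as a direct specialization of Lemma~\ref{lem:coefficient}. That lemma gives $|h_{\rm S}|$ as approximately Gaussian, with mean $\mu=\sum_{l=1}^{L}\mu_l$ and variance $\delta_{\rm S}^2 = L-\sum_{l}\mu_l^2$. Here $\mu_l$ is the first moment of the mixture Nakagami-$m$ amplitude from Proposition~\ref{prop:channelModel}, computed in \eqref{eq:mu_l}. So the plan is to show that, under the stated hypotheses, all the per-link quantities coincide, and then substitute.

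\textbf{Step 1: identical first moments.} Assume the serving IRSs lie on the same equal-gain trajectory (Theorem~\ref{theorem:cassiniOval} or Proposition~\ref{pro:ellipse}). Then every cascaded link shares the same $d_{\rm BIU}$, and by hypothesis the same small-scale fading law. Consequently the mixture Nakagami-$m$ parameters $\{\varepsilon_i^{(l)},\beta_i^{(l)},\xi_i^{(l)}\}$ of Proposition~\ref{prop:channelModel} do not depend on $l$. By \eqref{eq:mu_l}, $\mu_l$ is then a common value, which we keep denoting $\mu_l$. Moreover the integrated path loss depends on the IRS location only through $d_{\rm BIU}$, so the distance scaling is common to all links as well.

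\textbf{Step 2: identical second moments.} The variance expression in Lemma~\ref{lem:coefficient} uses the normalization $\mathbb{E}[|h_{{\rm BIU},l}|^2]=1$ for each link, which is the unit-mean gain convention of \eqref{eq:MG_para_individual}. For independent links with phases aligned to a reference link, each link therefore contributes $\operatorname{Var}(|h_{{\rm BIU},l}|)=1-\mu_l^2$. Summing $L$ identical, independent terms gives
\[
\mu = L\mu_l, \qquad \delta_{\rm S}^2 = \sum_{l=1}^{L}\left(1-\mu_l^2\right) = L - L\mu_l^2 .
\]
This is exactly the claimed parameter pair. The sum in the lemma is indexed from $l=0$; I would read that as a typo for $l=1$, since there are only $L$ serving IRSs.

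\textbf{Step 3: Gaussian law.} With identical and independent summands, the Gaussian approximation is just the classical Lindeberg--L\'evy central limit theorem. This is simpler than the general (Lyapunov-type) version presumably used for Lemma~\ref{lem:coefficient}. The approximation is valid for large $L$, which holds when the IRSs are densely deployed, since $L=\lambda_{\rm I}S(d_{\rm BIU,max})$.

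The only delicate point is justifying independence across links together with the unit-power normalization, since that is what fixes the constant $L$ in the variance. I would rely on the same i.i.d.\ assumption used in Lemma~\ref{lem:coefficient}, and note that the statement's phrase ``received signal power'' really refers to the amplitude $|h_{\rm S}|$.
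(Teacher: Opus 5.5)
Your proposal is correct and takes the same route as the paper. The paper gives no separate argument for this statement: it obtains it directly from Lemma~\ref{lem:coefficient} by setting all $\mu_l$ equal. Appendix~\ref{append:CLT_h} uses the ingredients you invoke: phase alignment so that $|h_{\rm S}|=\sum_l|h_{{\rm BIU},l}|$, unit second moment per link, and uncorrelated links, which give $\mathbb{E}[|h_{\rm S}|^2]=L+\sum_{l_1\neq l_2}\mu_{l_1}\mu_{l_2}$. Your summation of per-link variances $1-\mu_l^2$ is the same computation. You are also right that the $l=0$ lower index in the lemma is a typo, and that ``received signal power'' should read ``amplitude.''
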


\subsubsection{Received Signal Power}
As the serving IRSs align phase shifts with a reference link, the received signal power is given by
%\begin{small}
\begin{equation}
	{ S} = P_{\rm T}\left| \sum_{l=1}^{L} { h}_{{\rm BIU},l}\right|^2 . 
\end{equation}
%\end{small}
Then, we model the received signal power as a Gamma distribution with the assistance of the second and fourth moments of amplitude of the channel coefficients, denoted as ${ S} \sim {\rm Gamma}(k,\theta)$ with parameters \cite{lyu2021hybrid}
\begin{equation} \label{eq:signalDist}
	k = \frac{\left( \mathbb{E}[|{ h}_{\rm S}|^2] \right)^2}{\mathbb{E}[|{ h}_{\rm S}|^4] -\left( \mathbb{E}[|{h}_{\rm S}|^2]  \right)^{2}  }, \quad \theta = \frac{\mathbb{E}[|{ h}_{\rm S}|^4] -\left( \mathbb{E}[|{ h}_{\rm S}|^2]  \right)^{2}  }{ \mathbb{E}[|{ h}_{\rm S}|^2]} ,
\end{equation}
where $ \mathbb{E}[|{ h}_{\rm S}|^2] =\delta_{\rm S}^{2}+\mu^{2} $, and $ \mathbb{E}[|{h}_{\rm S}|^4] = 3\delta_{\rm S}^{4}+6\delta_{\rm S}^2\mu^2+\mu^4 $.

As the received signal power is modeled as a Gamma distribution, denoted as $ {S}\sim {\rm Gamma}(k,\theta)$, the performance be evaluated using the tools in \cite{hamdi2007useful} since all the performance metrics can be expressed as functions of SNR \cite{10458985,chun2017comprehensive}.%,7890358}.

\subsubsection{Spectral Efficiency}

Spectral efficiency is defined as %\cite{jo2012heterogeneous}
\begin{equation} \label{spectralEfficiency}
	\mathcal{R} = \mathbb{E}[\ln(1+\rm SNR)].
\end{equation}
By utilizing the distribution parameters of the signal power in \eqref{eq:signalDist}, the spectral efficiency can be derived by utilizing the analysis framework proposed in \cite{hamdi2007useful}.
\begin{equation}
	\begin{split}
		\mathcal{R} =  \int_{0}^{\infty} { {\frac{1}{z}\left( 1-\frac{1}{(1+z)^{k}} \right) } {e^{-\frac{\delta^{2} z}{\theta}}} }  \mathrm{d}z.
	\end{split} 
\end{equation}

\subsubsection{Moments of SNR}

The moments of the SNR $\mathbb{E}[{\rm SNR}^{r}]$ can be evaluated as follows \cite{chun2017comprehensive}
\begin{equation}
	\begin{split}
		\mathbb{E}[{\rm SNR}^{r}] = \int_{0}^{\infty} { { \frac{\Gamma(k+r)}{\Gamma(k)\Gamma(r)}z^{r-1} } e^{-\frac{\delta^{2} }{\theta}z}}   \mathrm{d}z  .
	\end{split} 
\end{equation}

\section{Numerical Results}\label{sec:numericalR}

In this section, Monte-Carlo (MC) simulation with $10^{6}$ iterations is carried out using MATLAB to verify the analysis and compare the performance under different system setups. The simulation setup is as follows, if not specified otherwise. The path loss exponent is $\alpha=3$, the transmit power of the BS is $P_{\rm T}=20$ dBm, the amplification power of the active reflecting elements is $P_{\rm F}=5$ dBm, the noise power at the receiver is $\delta^{2}=-90$ dBm, the thermal noise power generated by the active reflecting element is $\delta_{\rm F}^{2}=-80$ dBm, the half of $d_{\rm BU}$ is $c=30$ m, the number of IRS elements are $N_{\rm pas}=300$, $N_{\rm act}=40$, the carrier frequency is 2 GHz \cite{amato2018tunneling,kang2023active}.

%% figures
\begin{figure*}[t!]
	\centering
	\subfigure[Product-distance path loss law.]{
		\begin{minipage}[t]{0.47\textwidth}
			\includegraphics[width=\linewidth]{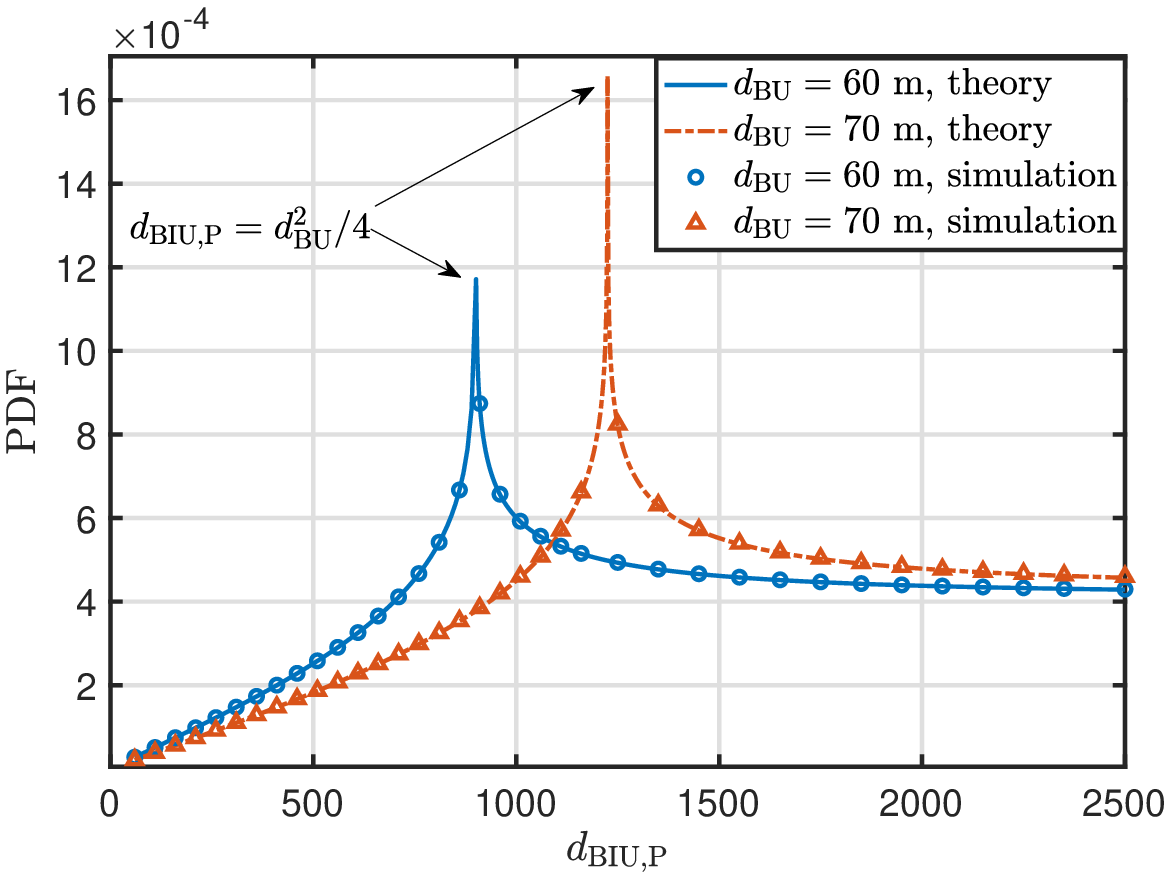}
		\end{minipage}
		\label{fig:PDF_cassiniOval}
	}
	\subfigure[Sum-distance path loss law.]{
		\begin{minipage}[t]{0.47\textwidth}
			\includegraphics[width=\linewidth]{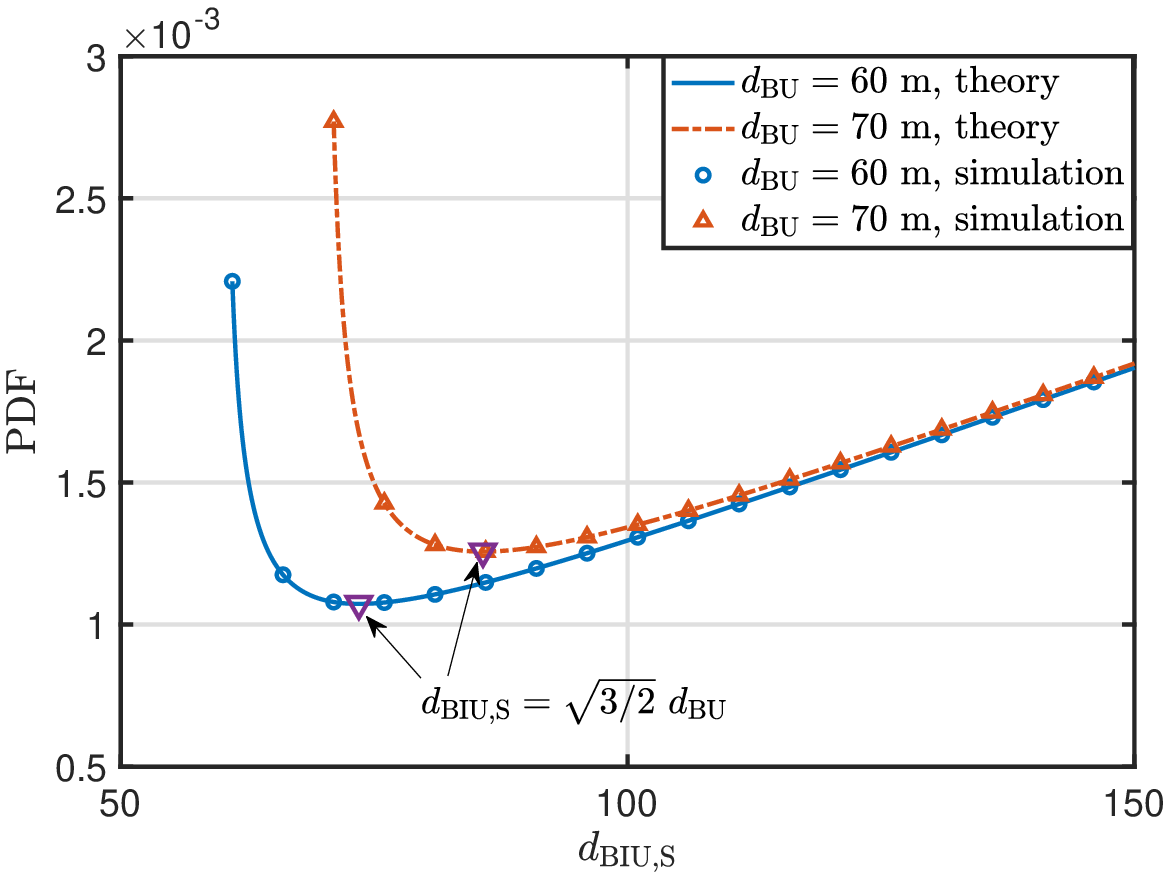}
		\end{minipage}
		\label{fig:PDF_ellipse}
	}
	\caption{PDF of the integrated path loss distance, $d_{\rm BIU}$.}
	\label{fig:PDF}
\end{figure*}

\subsection{Distribution of the Integrated Path Loss Distance}

We first display the PDF of the integrated path loss distance, $d_{\rm BIU}$, with different $d_{\rm BU}$ for both product- and sum-distance path loss laws in Fig.~\ref{fig:PDF}, thereby validating the findings in Lemma~\ref{lem:cassiniOvalPCDF} and Lemma~\ref{lem:ellipsePCDF}.

It has been observed that the PDF of the integrated path loss distance scales with the link distance between BS and UE. The results align with the geometric models depicted in Fig.~\ref{fig:IRStrjectory}, indicating that the distribution of $d_{\rm BIU}$ is dependent on $d_{\rm BU}$. Consequently, even in instances where the direct link is obstructed, a BS in closer proximity of the typical UE, as indicated by a smaller $d_{\rm BU}$, may have a higher probability of delivering superior performance.

In Fig.~\ref{fig:PDF_cassiniOval}, for product-distance path loss law, it is observed that the PDF exhibits an initial increase followed by a subsequent decrease and the demarcation point is $d_{\rm BIU,P}=d_{\rm BU}^2/4$, i.e., ${\rm e}_{\rm P}=1$. On the other hand, an opposite trend is observed in the sum-distance path loss law and the demarcation point is $d_{\rm BIU,S}=\sqrt{{3}/{2}}~d_{\rm BU}$, i.e., ${\rm e}_{\rm S}= \sqrt{{2}/{3}}$, as displayed in Fig.~\ref{fig:PDF_ellipse}. By comparing Figs.~\ref{fig:PDF_cassiniOval} and \ref{fig:PDF_ellipse}, it can be observed that the system suffers severer degradation when employing the product-distance path loss law.

\subsection{Achievable Rate Under Different Path Loss Laws}% of the hybrid IRS-aided communication }

The three-dimensional (3-D) simulation of the achievable rate in the IRS-aided communication is presented in Fig.~\ref{fig:achievableR_3D}, showcasing the impact of different path loss laws for passive/hybrid IRS. % We have the following observations. 

As shown in Figs.~\ref{fig:achievableR_pro_3D_pas} and \ref{fig:achievableR_sum_3D_pas}, when the IRS is fully passive, the performance has strong geometric symmetry, which aligns with the geometric models of the integrated path loss for the cascaded channels introduced in Section \ref{sec:geo}. This is expected since the received SNR of passive IRS-aided communication is proportional to the channel gain of the integrated cascaded channel.

However, the adoption of the hybrid IRS breaks the symmetry between the UE side and BS side, owing to the location preference of the active IRS sub-surface, as shown in Figs.~\ref{fig:achievableR_pro_3D_hy} and \ref{fig:achievableR_sum_3D_hy}. Furthermore, the optimal locations of the hybrid IRS depend on the integrated path loss distance of the cascaded channel and the system parameters, such as the transmit power of the serving BS, $P_{\rm T}$, the amplification power of the active reflecting elements, $P_{\rm F}$, the power of the thermal noise introduced by the active parts, $\delta_{\rm F}^{2}$, the thermal noise at the typical UE, $\delta^{2}$, and the path loss exponent, $\alpha$. 

Moreover, compared to the product-distance path loss law, the impact of IRS locations is relatively insignificant when adopting the sum-distance path loss law, as shown in Fig.~\ref{fig:achievableR_sum_3D_hy}. For example, in the case of passive IRS, the achievable rate gap that can be achieved across the entire area is about 15 nats/sec/Hz for the product-distance path loss law, while it is around 3 nats/sec/Hz for the sum-distance path loss law. %6) In Fig. \ref{fig:achievableR_sum_3D_hy}, the minimum value is achieved when the IRS is located near the location of the typical BS.

\subsection{Nearest and Opportunistic IRS Association Policy}

In Fig.~\ref{fig:perfComp}, we present the average received SNR from IRS-aided channels versus the IRS density for both product-/sum-distance path loss laws. The 'o' and 'n' represent opportunistic and nearest association policy, respectively. The parameters for the simulations are: $\alpha=4$, $\eta \leq 14$ dB, the total deployment budget is $500$, and the deployment cost of active(passive) reflecting element is 5. Specifically, the number of IRS element is $N=500$ for passive IRS, while the number of IRS element is $N=100$,  for active IRS \cite{amato2018tunneling,kang2023active}. We observed that the performance varies across different configurations, such as association policies, e.g., nearest and opportunistic association, and types of IRS, i.e., passive/active/hybrid IRS. It is noteworthy that the hybrid IRS offers better performance than passive/active IRS with the same power and deployment budget for both product-/sum-distance path loss laws, under the mentioned setup.

Moreover, there are big performance gaps between the two association policies for all three types of IRSs. These results verifies that even for passive IRS, the nearest IRS might be a sub-optimal option. Although the nearest association provides a similar trend for the passive IRS-aided communication, there is still a big performance gap between these two association policies, which can be explained by the geometric models.

However, for the active and hybrid IRS, these two association policy behaves reversely. If nearest association policy is adopted, the performance of active/hybrid IRS decreases with increasing density of IRS. On the contrary, the performance increases with IRS density when opportunistic association policy is adopted. This is reasonable because the active IRS is supposed to be deployed at a certain distance from the typical UE \cite{you2021wireless,li2022active}. When the density of IRS is increasing, the connect distance between the UE and its nearest IRS is decreased, which would degrade the performance enhancement of the active IRS due to the instantaneous amplification and thermal noise introduced by the active parts.

\subsection{Outage Probability and Rate Probability}

The relationship between outage probability and rate probability, defined as one minus low-rate probability, is illustrated in Fig. \ref{fig:outagePrateP} for two path loss models. Generally, the curves for the sum-distance model are more sensitive to changes in the $\frac{\tau}{{e^{r_{0}}-1}}$, which is an essential factor in \eqref{eq:outagePrateP}. The curves fall into three distinct categories: a) the outage probability and rate probability are equal given $\frac{\tau}{{e^{r_{0}}-1}}=1$, b) the outage probability exceeds the rate probability when $\frac{\tau}{{e^{r_{0}}-1}}<1$, and c) the outage probability is lower than the rate probability when $\frac{\tau}{{e^{r_{0}}-1}}>1$. In a single cell, the outage probability for the same rate probability increases with $\frac{\tau}{{e^{r_{0}}-1}}$. The effect of path loss exponents is presented for both sum- and product-distance path loss laws. The results indicate that as the path loss exponent increases, the benefit of IRSs is limited due to severe path loss of the cascaded channels. 

\subsection{Channel Modeling of the Mixture Channel of Multiple IRSs}

In Fig.~\ref{fig:channel}, we present the distribution for small-scale channel coefficients of the mixture channel composed by multiple serving IRSs. It is observed that the approximated channel distribution proposed in Lemma~\ref{lem:coefficient} closely aligns with the empirical results. Furthermore, as $L$ increases, the PDF of the channel coefficient exhibits reduced kurtosis, and the mixture channel becomes stronger and provides a greater channel gain. {Thus, the utilization of multiple IRSs in communication enables greater flexibility in deploying the IRSs, resulting in comparable performance to that of a single serving IRS.}

\begin{figure*}[t!]
	\centering
	\subfigure[Product-distance path loss law (Passive IRS).]{
		\begin{minipage}[t]{0.47\textwidth}
			\includegraphics[width=\linewidth]{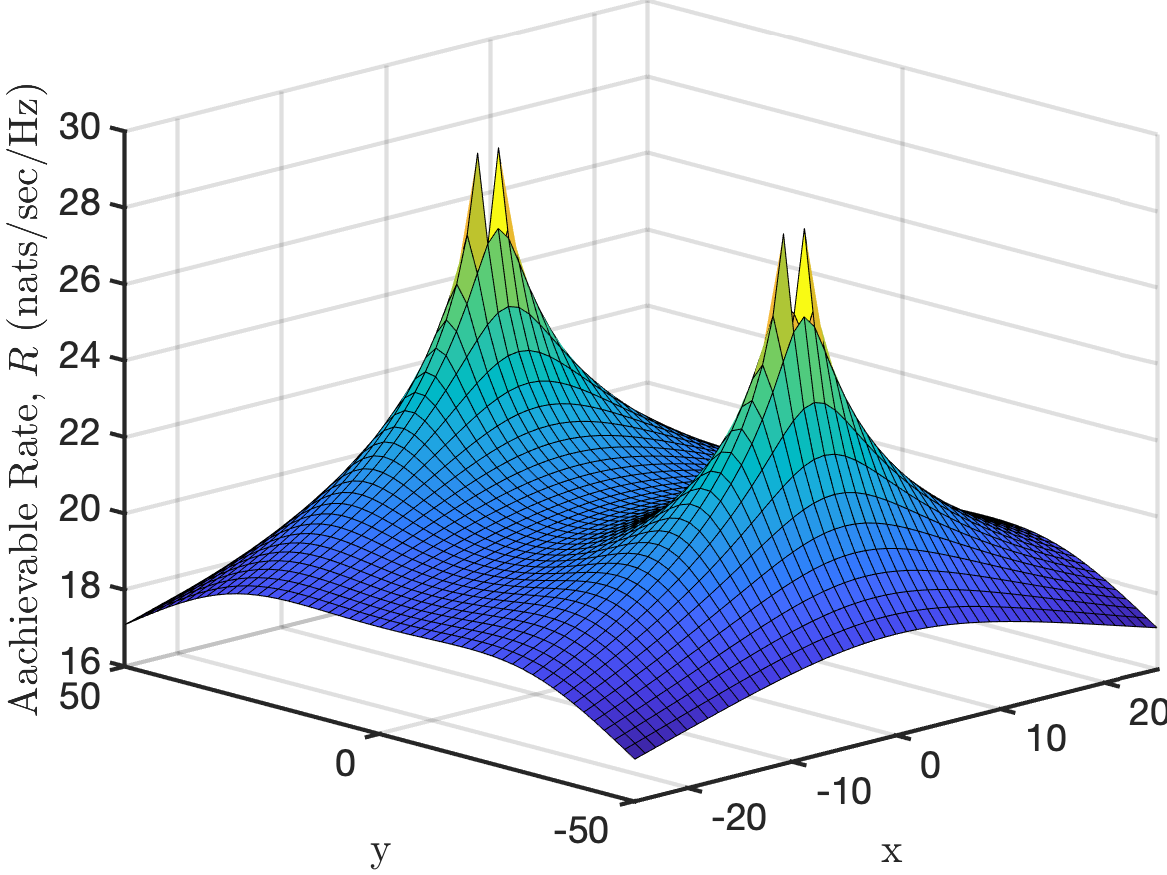}
		\end{minipage}
		\label{fig:achievableR_pro_3D_pas}
	}
	\hspace{2pt}
	\subfigure[Sum-distance path loss law (Passive IRS).]{
		\begin{minipage}[t]{0.47\textwidth}
			\includegraphics[width=\linewidth]{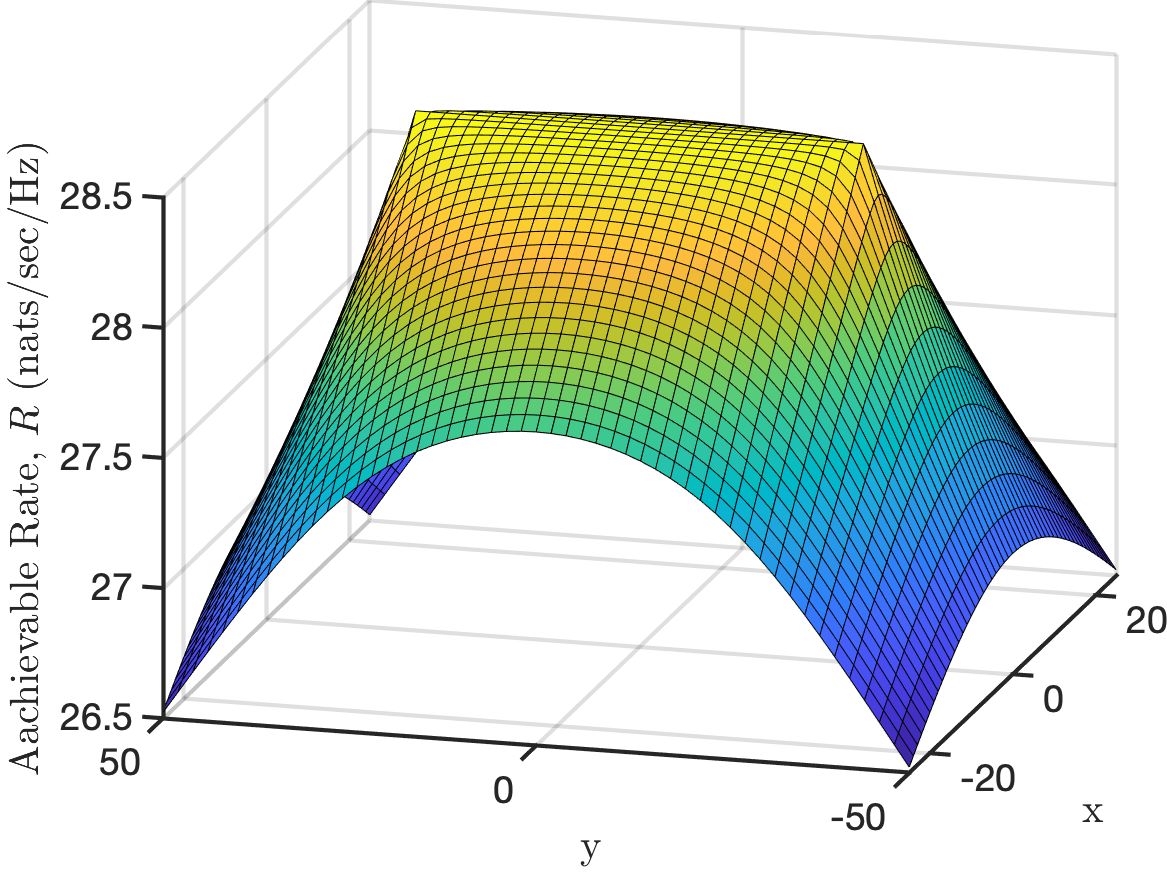}
		\end{minipage}
		\label{fig:achievableR_sum_3D_pas}
	}
	\hspace{2pt}
	\centering
	\subfigure[Product-distance path loss law (Hybrid IRS).]{
		\begin{minipage}[t]{0.47\textwidth}
			\includegraphics[width=\linewidth]{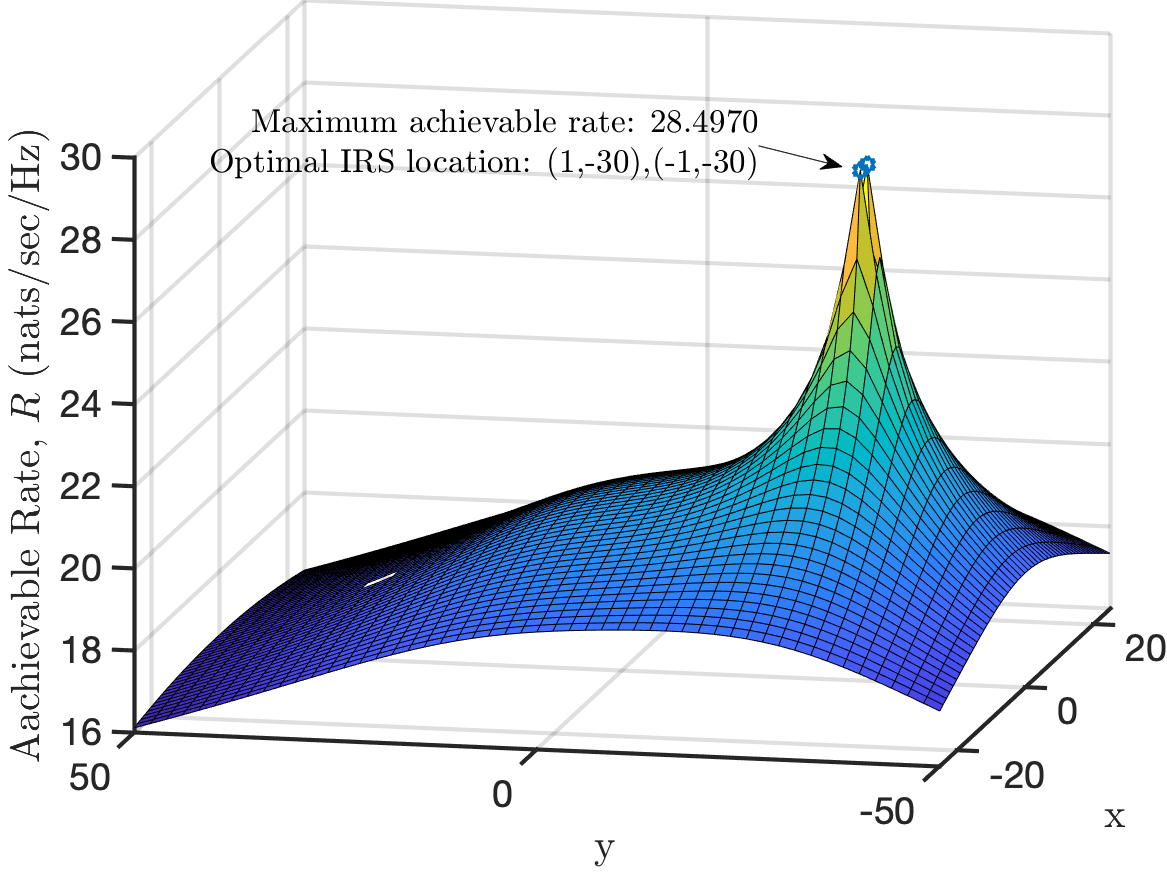}
		\end{minipage}
		\label{fig:achievableR_pro_3D_hy}
	}
	\hspace{2pt}
	\subfigure[Sum-distance path loss law (Hybrid IRS).]{
		\begin{minipage}[t]{0.47\textwidth}
			\includegraphics[width=\linewidth]{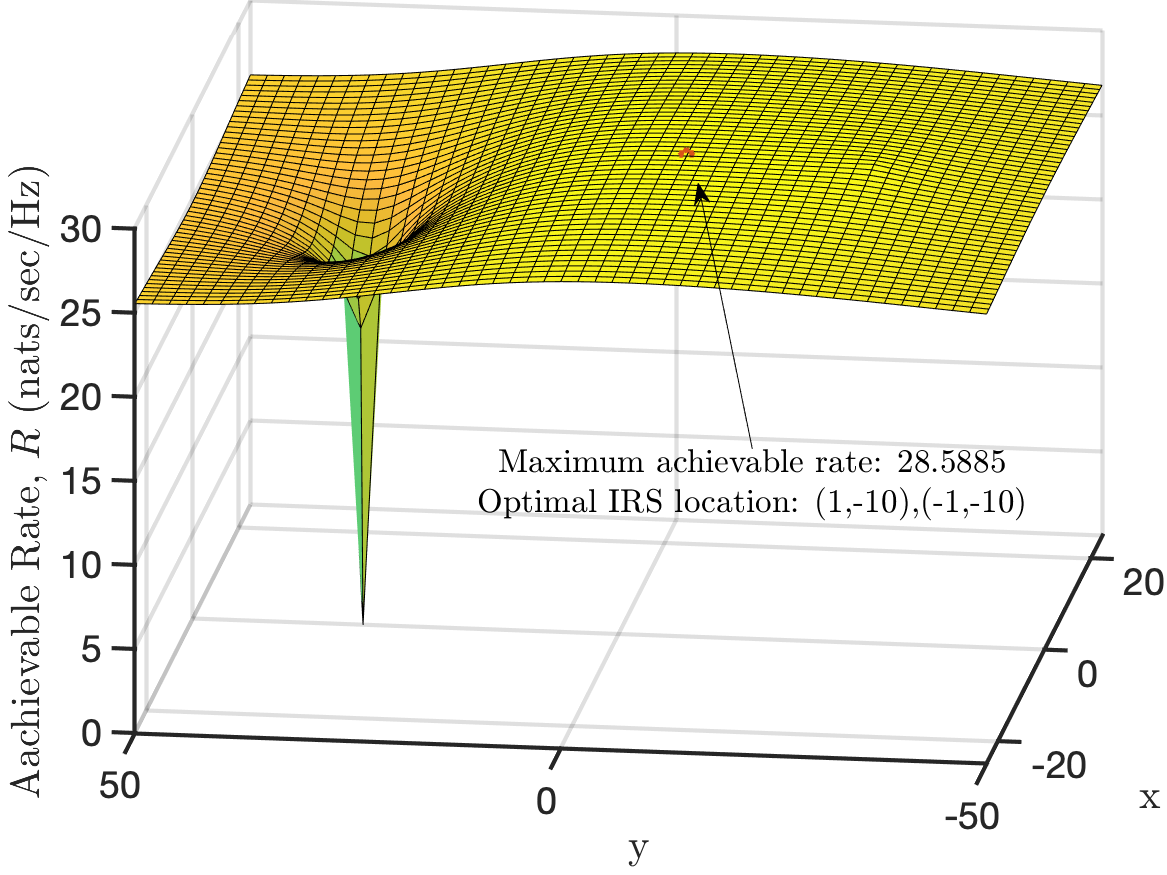}
		\end{minipage}
		\label{fig:achievableR_sum_3D_hy}
	}
	\caption{Achievable rate across the whole area.}
	\label{fig:achievableR_3D}
\end{figure*}

\begin{figure*}[t!]
	\centering
	\subfigure[Product-distance path loss law.]{
		\begin{minipage}[t]{0.47\textwidth}
			\includegraphics[width=\linewidth]{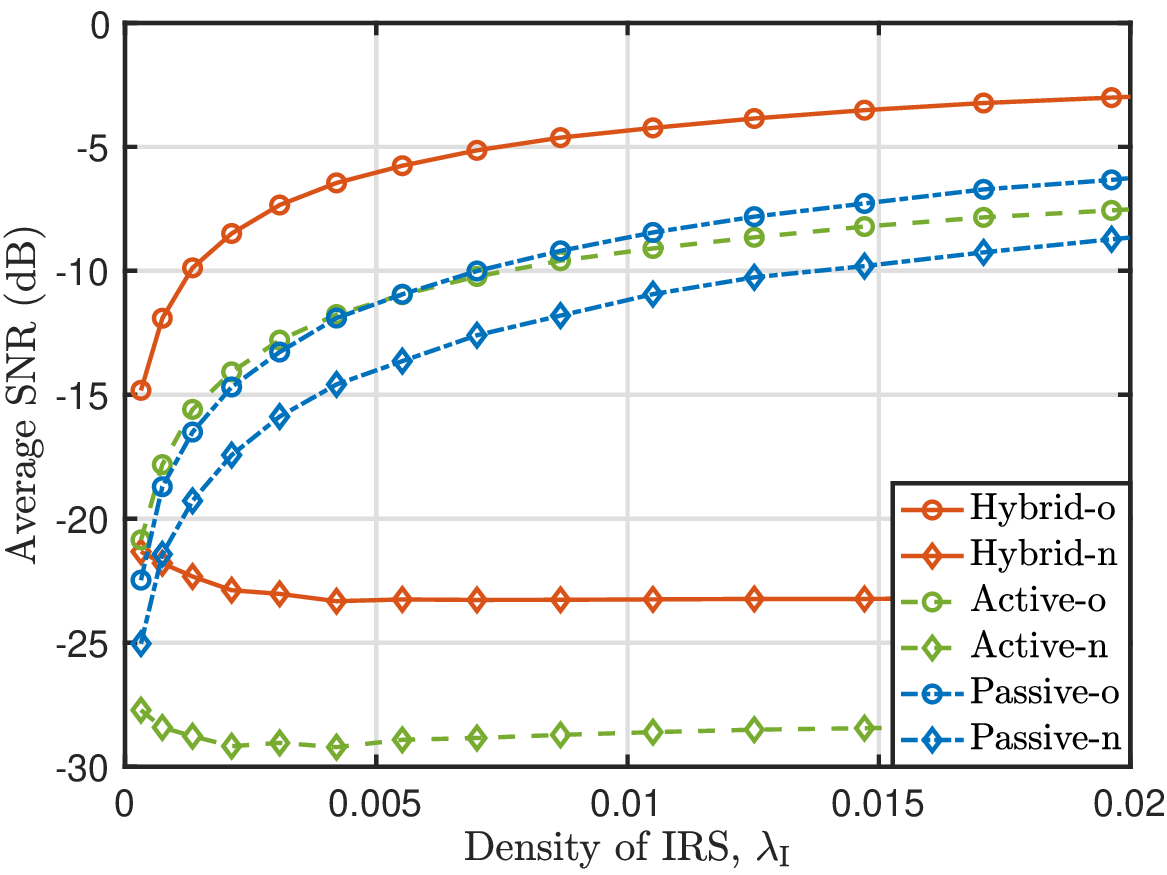}
		\end{minipage}
		\label{fig:perfComp_pro}
	}
	\hspace{2pt}
	\subfigure[Sum-distance path loss law.]{
		\begin{minipage}[t]{0.47\textwidth}
			\includegraphics[width=\linewidth]{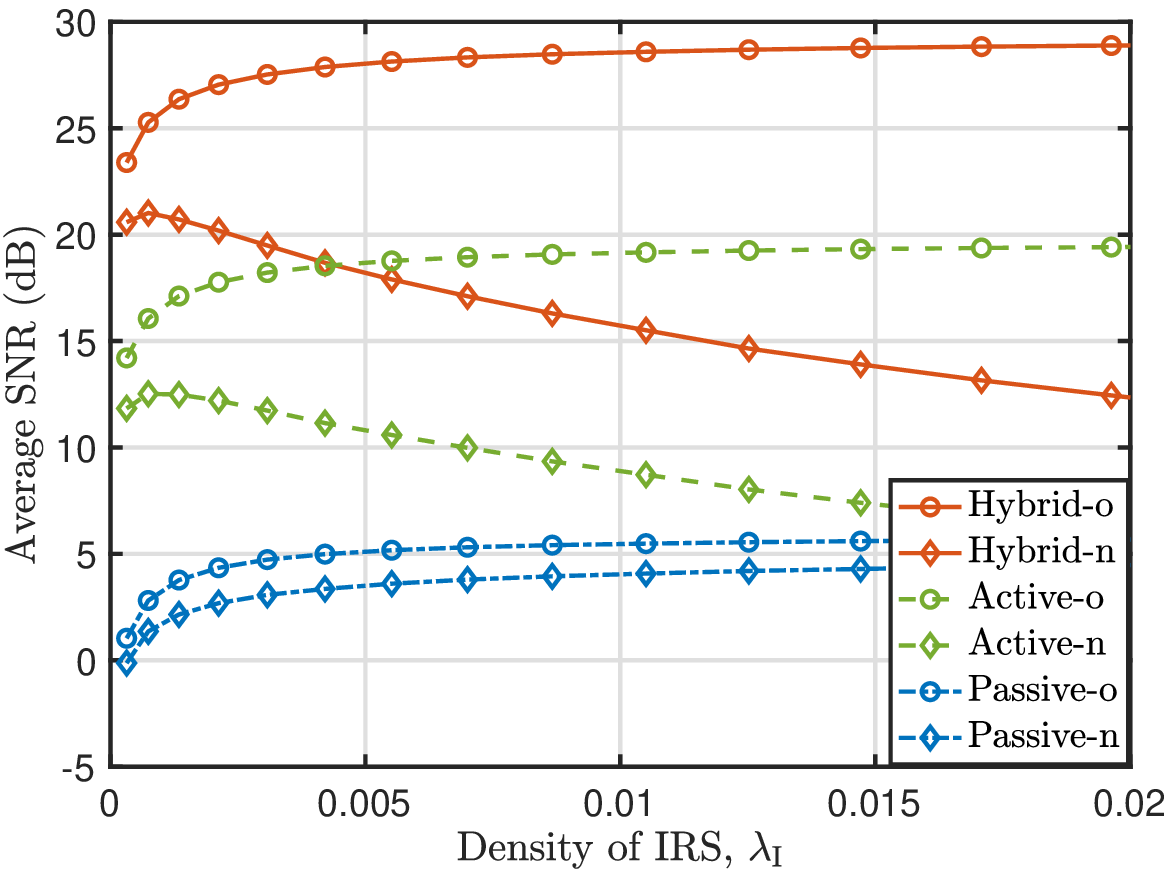}
		\end{minipage}
		\label{fig:perfComp_sum}
	}
	\caption{Performance comparison between nearest and opportunistic association policies across various types of IRS with same power and deployment budget.}
	\label{fig:perfComp}
\end{figure*}

\begin{figure*}[t!]
	\centering
	\subfigure[Product-distance path loss law.]{
		\begin{minipage}[t]{0.47\textwidth}
			\includegraphics[width=\linewidth]{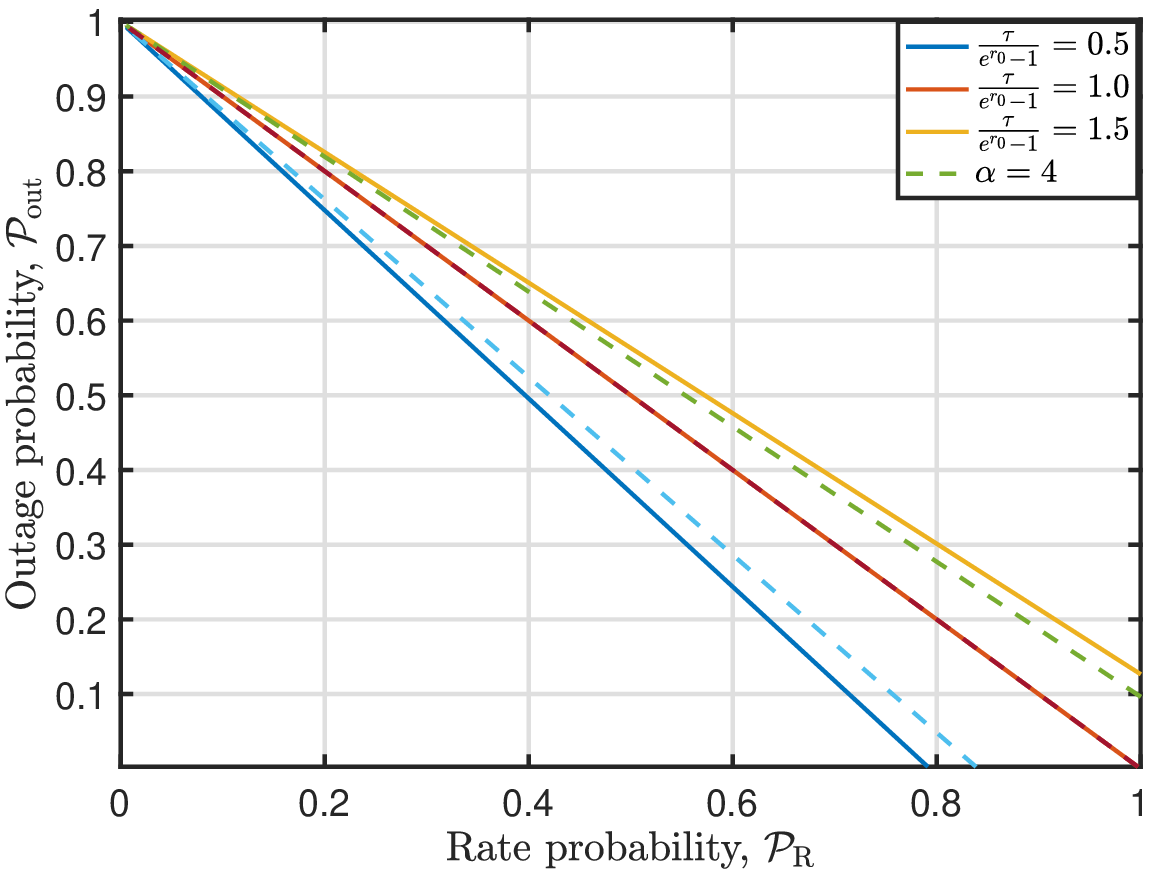}
		\end{minipage}
		\label{fig:outagePrateP_product_singleCell}
	}
	\hspace{2pt}
	\subfigure[Sum-distance path loss law.]{
		\begin{minipage}[t]{0.47\textwidth}
			\includegraphics[width=\linewidth]{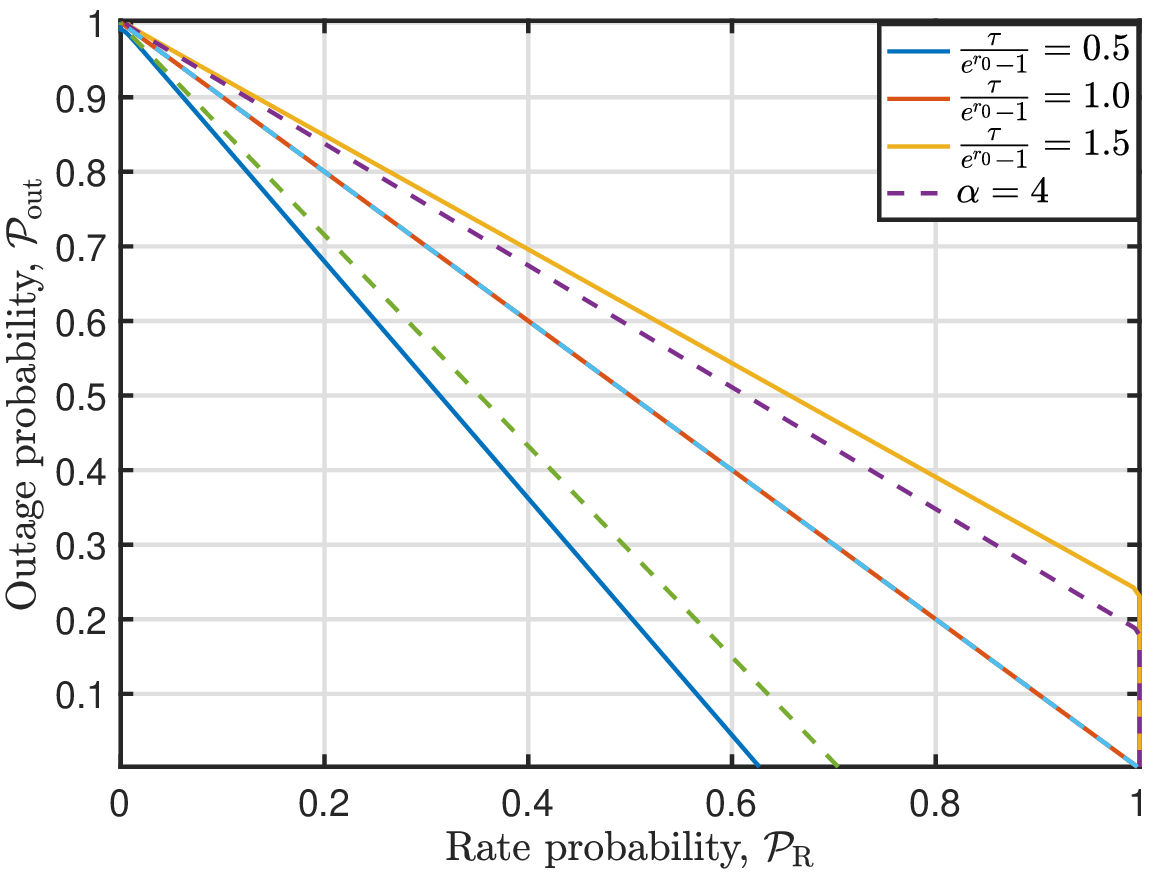}
		\end{minipage}
		\label{fig:outagePrateP_sum_singleCell}
	}
	\caption{Relationship between the outage probability and rate probability.}
	\label{fig:outagePrateP}
\end{figure*}

\begin{figure}[t!]
	\centering
        \includegraphics[width=0.94\linewidth]{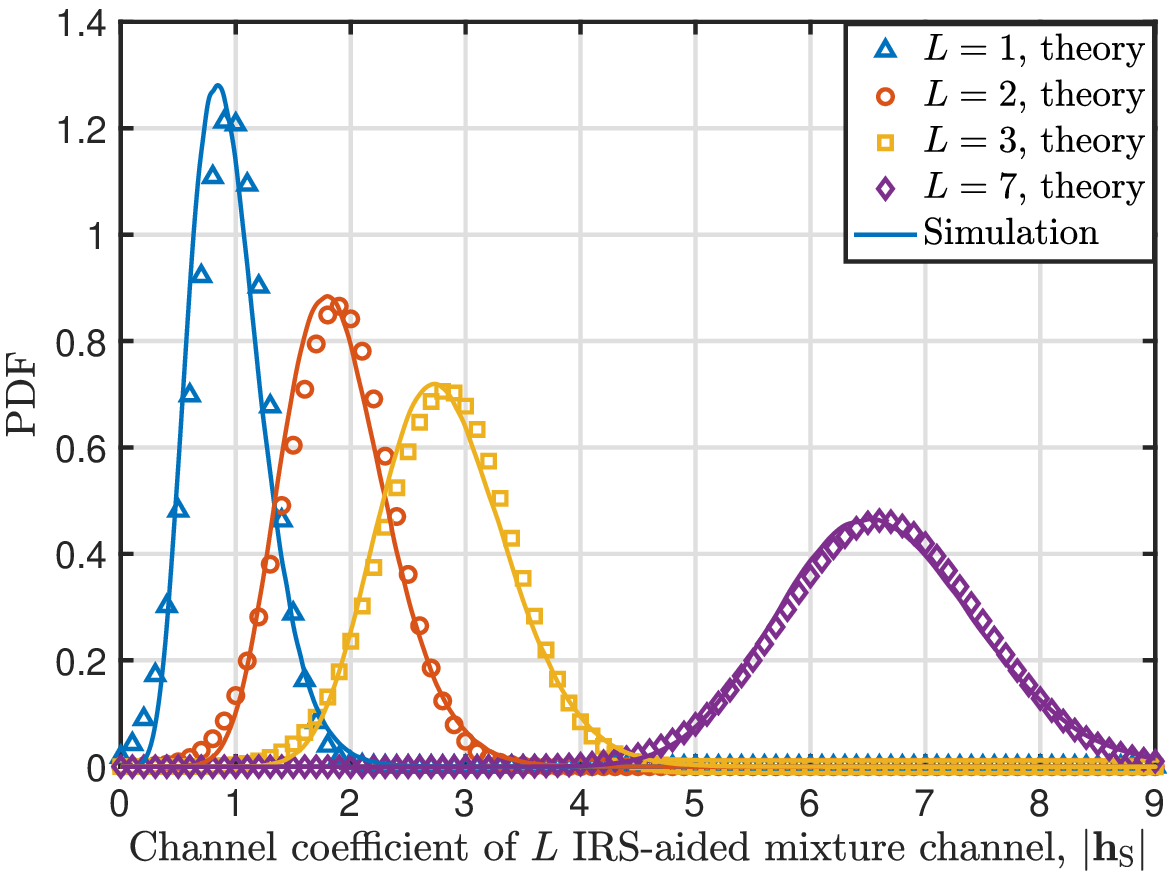}
        \caption{Verification of the approximated channel model for the mixture channel of multiple IRSs-aided scenarios.}
        \label{fig:channel}
\end{figure}

\section{Conclusion} \label{sec:conclusion}

In this work, we studied the geometric models to characterize the cascaded channel’s integrated path loss distance in IRS-aided communications. We derived the statistical distribution of the integrated path loss distance and applied these geometric models to determine the hybrid IRS’s optimal locations, which encompass both passive and active IRS as special cases. Our findings indicate that the hybrid IRS’s optimal placement depends on the integrated path loss distance of the cascaded channel and the network parameters. This solution can be utilized to develop an opportunistic association policy for the IRS and strategic deployment planning. The geometric models offer a new perspective for analyzing IRS-aided networks’ performance, allowing accurate statistical modeling of the integrated cascaded channel. Furthermore, we studied several scenarios to explore the potential implementations of the geometric model.

\appendices

\section{} \label{append:theoremCassiniOval}
In this appendix, we provide a proof of Theorem \ref{theorem:cassiniOval}.
The area calculation of the Cassini oval is divided into two cases: ${\rm e}_{\rm P}\geq 1$ and ${\rm e}_{\rm P}<1$, as shown below, where ${\rm e}_{\rm P}=\frac{\sqrt{d_{\rm BIU,P}}}{c}$.  

{ \bf Case 1:} When ${\rm e}_{\rm P}\geq1$, the curve is continuous, and the polar radius of a Cassini Oval is %given by
\begin{equation}
	r^2 = c^2\left[ \cos(2\theta)+\sqrt{{\rm e}_{\rm P}^4 -\sin^{2}(2\theta)} \right],
\end{equation}
and then, the area is given by
%\begin{small}
\begin{equation}
	\begin{split}
		S = &~ 4\int_{0}^{\frac{\pi}{2}} \frac{1}{2} r^2 {\rm d} \theta \\= &~
		2c^2\left[  \int_{0}^{\frac{\pi}{2}} \cos(2\theta) {\rm d} \theta + \int_{0}^{\frac{\pi}{2}} \sqrt{{\rm e}_{\rm P}^4 -\sin^{2}(2\theta)} {\rm d} \theta \right]  \\  = &~ 2c^2 \left[ \int_{0}^{\frac{\pi}{4}} \sqrt{{\rm e}_{\rm P}^4 -\sin^{2}(2\theta)} {\rm d} \theta + \int_{\frac{\pi}{4}}^{\frac{\pi}{2}} \sqrt{{\rm e}_{\rm P}^4 -\sin^{2}(2\theta)} {\rm d} \theta \right]\\  = &~2c^2 \int_{0}^{\frac{\pi}{2}} {\rm e}_{\rm P}^{2} \sqrt{1-{\rm e}_{\rm P}^{-2}\sin^{2}(\theta)}{\rm d}\theta = 2d_{\rm BIU,P}E({\rm e}_{\rm P}^{-2}), \notag
	\end{split}
\end{equation}
%\end{small}
where $E(k)=\int_{0}^{\frac{\pi}{2}}\sqrt{1-k^2\sin^{2}\theta}{\rm d} \theta$ is the complete elliptic integral of the second kind \cite{abramowitz1964handbook}.

{ \bf Case 2:} When ${\rm e}_{\rm P}<1$, the curve is discontinuous, and the polar radius of a Cassini Oval is given by
\begin{equation}
	r_{1,2}^2 = c^2\left[ \cos(2\theta)\pm\sqrt{{\rm e}_{\rm P}^4 -\sin^{2}(2\theta)} \right].
\end{equation}
The range of $\theta$ is given by ${\rm e}_{\rm P}^4=\sin^{2}(2\theta)$, which results in $\theta\in[-\theta_{0},\theta_{0}]$ and $\theta_{0}=\frac{1}{2}\arcsin({\rm e}_{\rm P}^2)$.
Then, the area is 
\begin{equation}
	S =  4\int_{0}^{\theta_{0}} \frac{1}{2} (r_{1}^2-r_{2}^2) {\rm d} \theta ,
\end{equation}
which can be further derived as
%\begin{small}
\begin{equation}
	\begin{split}
		S = &~ 4c^2\int_{0}^{\theta_{0}} \sqrt{{\rm e}_{\rm P}^4 -\sin^{2}(2\theta)} {\rm d} \theta\\   = &~ 2c^2\int_{0}^{2\theta_{0}} \sqrt{{\rm e}_{\rm P}^4 -\sin^{2}(\theta)} {\rm d} \theta \\ = &~ 2d_{\rm BIU,P}E\left[ \sin^{-1}({\rm e}_{\rm P}^2),{\rm e}_{\rm P}^{-2}\right] \\  \overset{(a)}{=} &~ 2d_{\rm BIU,P}\int_{0}^{1} \frac{1-x^2}{\sqrt{1-{\rm e}_{\rm P}^4x^2}\sqrt{1-x^2}} {\rm e}_{\rm P}^2 {\rm d}x \\ = &~ 2c^2{\rm e}_{\rm P}^4 \left[ K({\rm e}_{\rm P}^2)-\int_{0}^{1} \frac{x^2}{\sqrt{1-{\rm e}_{\rm P}^4x^2}\sqrt{1-x^2}} {\rm d}x \right] \\ = &~ 2c^2 \left[ E({\rm e}_{\rm P}^2) -(1-{\rm e}_{\rm P}^{4})K({\rm e}_{\rm P}^2) \right], \notag
	\end{split}
\end{equation}
%\end{small}
where step (a) follows a substitution of $x={\rm e}_{\rm P}^{-2}t $, $E(\phi,k)=\int_{0}^{\phi}\sqrt{1-k^2\sin^{2}\theta}{\rm d} \theta$ is the incomplete elliptic integral of the second kind, and $K(k)=\int_{0}^{\frac{\pi}{2}} \frac{1}{\sqrt{1-k^2\sin^{2}\theta}}{\rm d} \theta$ is the complete elliptic integral of the first kind \cite{abramowitz1964handbook}. 
With some mathematical simplification, the results in \eqref{eq:areaOfCassiniOval} is achieved. This completes the proof.

\section{}
\label{append:lemma1}

In this appendix, we provide a proof of Lemma \ref{lem:ellipsePCDF}.
The CDF can be derived using the area ratio, and then PDF is achieved by taking derivative of the CDF with the aid of $\frac{{\rm d} }{{\rm d} k}E(k)=\frac{E(k)-K(k)}{k}$, and $\frac{{\rm d} }{{\rm d} k}K(k)=\frac{E(k)}{k(1-k^2)}-\frac{K(k)}{k}$ \cite{abramowitz1964handbook}. This completes the proof.

\section{}\label{append:localOp}
In this appendix, we provide a proof of Lemma \ref{lem:localOp}. For product-distance path loss law, given the $d_{\rm BIU}$, the coordinates of the optimal location of the hybrid passive and active IRS satisfies

%\begin{small}
\begin{equation}\label{eq:coord_1}
	(x^2+y^2)^2-2c^2(x^2-y^2)=d_{\rm BIU}^{2}-c^{4},
\end{equation} 
%\end{small}

%\begin{small}
\begin{equation}\label{eq:coord_2}
	(x+c)^{2}+y^{2}=\left(\frac{P_{\rm T}\delta^{2}}{P_{\rm F}\delta_{\rm F}^{2}}\right)^{\frac{1}{\alpha}}d_{\rm BIU}.
\end{equation} 
%\end{small}
Thus, we can achieve that 
%\begin{small}
\begin{equation}\label{eq:cood_y}
	y^2 = \left(\frac{P_{\rm T}\delta^{2}}{P_{\rm F}\delta_{\rm F}^{2}}\right)^{\frac{1}{\alpha}}d_{\rm BIU} - (x+c)^2. 
\end{equation}
%\end{small}
Next, substitute \eqref{eq:cood_y} into \eqref{eq:coord_1} and \eqref{eq:coord_2}, \eqref{eq:cood} can be achieved with some mathematical simplifications. Following similar procedure, the results for sum-distance path loss law can be achieved. This completes the proof.

\section{} \label{append:mixNaka}
In this appendix, we provide a proof for Proposition~\ref{prop:channelModel}. As the channel of each individual link, e.g., BS$\rightarrow$IRS or IRS$\rightarrow$UE, follows generalized fading distribution, the channel gain of the cascaded channel can be modeled as a mixture Gamma distribution as proved in \cite{10458985}. Then, the PDF of mixture Nakagami-$m$ distribution for channel coefficient of the cascaded link (BS$\rightarrow$IRS$\rightarrow$UE) in \eqref{append:mixtureNaka} can be achieved with the assistance of the PDF scaling law $f_{\rm X}(x)=2xf_{\rm S}(x^2)$, where $x=\sqrt{s}$. This completes the proof.

\section{} \label{append:CLT_h}
In this appendix, we provide a proof for the approximated channel modeling for the mixture channel of $L$ IRSs aided mixture channel in Lemma~\ref{lem:coefficient}. As the channel coefficient of each cascaded link follows mixture Nakagami-$m$ distribution, the average channel coefficient of the mixture Nakagami-$m$ distributed link is derived as follows: 
%\begin{small}
\begin{equation}
	\begin{split}
		& \mathbb{E}[X] = \sum_{i=1}^{I} \varepsilon_{i} \int_{0}^{\infty} x^{2\beta_{i}-1}e^{-\xi_{i}x^2}\cdot 2x{\rm d} x \\& \overset{(a)}{=} \sum_{i=1}^{I} \varepsilon_{i} \int_{0}^{\infty} t^{\beta_{i}-\frac{1}{2}}e^{-\xi_{i}t} {\rm d}t  \overset{(b)}{=}  \sum_{i=1}^{I} \varepsilon_{i}\left( \beta_{i}-\frac{1}{2}\right) ! \xi_{i}^{-\beta_{i}-\frac{1}{2}}, \notag
	\end{split}
\end{equation}
%\end{small}
where step (a) is achieved by a substitution of $t=x^2$, and step (b) is derived with the assistance of \cite[eq.3.351.5]{abramowitz1964handbook}. As phase alignment is employed, the mean of the channel coefficient is %given by
\begin{equation}
	\mu = \mathbb{E}[|{h}_{\rm S}|] = \mathbb{E}[|{ h}_{{\rm BIU},1}|] + \cdots + \mathbb{E}[|{ h}_{{\rm BIU},L}|] = \sum_{l=1}^{L} \mu_{l}, 
\end{equation} 
where $\mu_{l}=\mathbb{E}[|{h}_{{\rm BIU},l}|]$, and its variance is given by
\begin{equation}
	\delta_{\rm S}^{2} = \mathbb{E}[|{ h}_{\rm S}|^{2}]  - \mu ^{2} ,
\end{equation}
with
%\begin{small}
\begin{equation}
	\mathbb{E}[|{ h}_{\rm S}|^{2}] =   \mathbb{E} \left[ \sum_{l_{1}=1}^{L}\sum_{l_{2}=1}^{L} {h}_{l_1} {h}_{l_2} \right] =  L+ \sum_{l_{1}=1}^{L}\sum_{l_{2}=1, l_{2}\neq l_{1}}^{L}  \mu_{l_{1}}\mu_{l_{2}} , \notag
\end{equation}
where ${h}_{l_1}$ and ${h}_{l_2}$ represent the channel of ${l_1}$-th and ${l_2}$-th BS$\rightarrow$IRS$\rightarrow$UE link, respectively.
%\end{small} 
This completes the proof.

\bibliographystyle{IEEEtran}  
\bibliography{references}

			\begin{IEEEbiography}[{\includegraphics[width=1in,height=1.25in,clip,keepaspectratio]{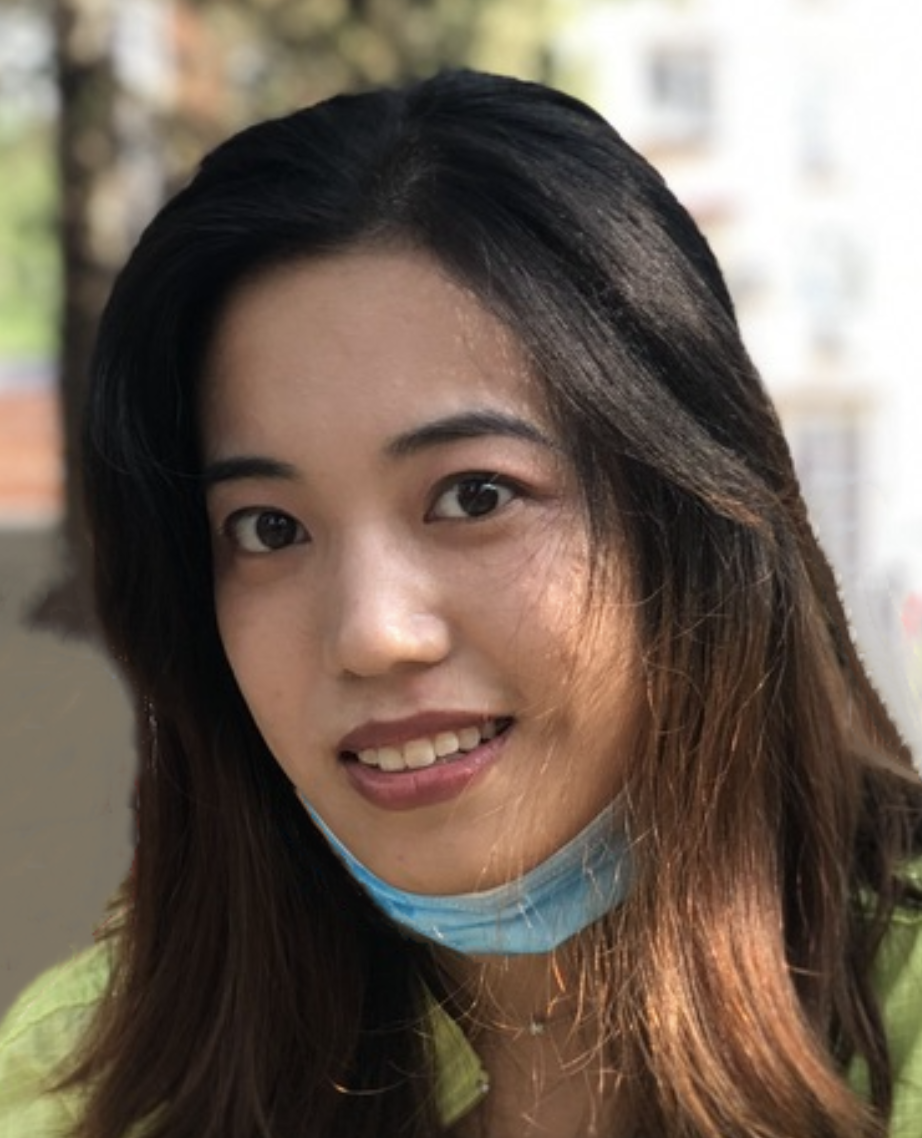}}]{Yunli Li} received the B.S. degree in electronic information engineering from Nanjing University of Aeronautics and Astronautics, in 2018; the M.S. degree in telecommunications from Hong Kong University of Science and Technology, in 2019. She is currently pursuing the Ph.D. degree with the Department of Electronic Engineering from the City University of Hong Kong. Her research interests include intelligent reflecting surface, channel modeling, and stochastic geometry.
				
			\end{IEEEbiography}

			\begin{IEEEbiography}[{\includegraphics[width=1in,height=1.25in, clip,keepaspectratio]{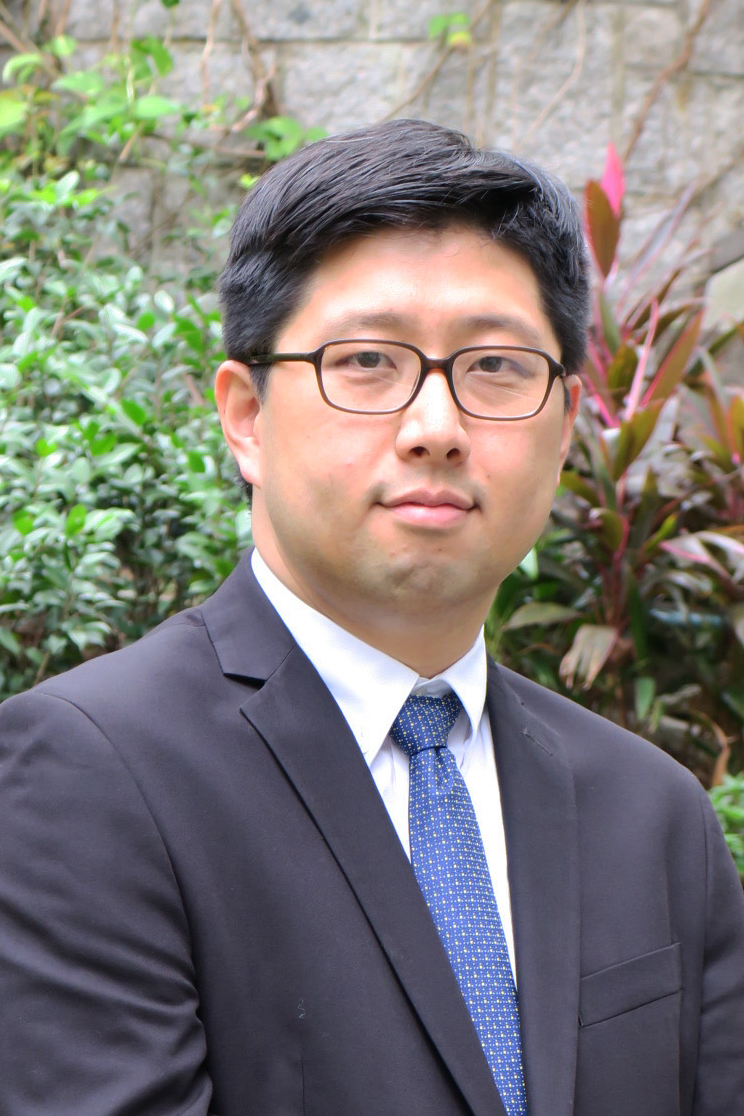}}]{Young Jin Chun} received the B.S. degree from Yonsei University in 2004, the M.S. degree from the University of Michigan in 2007, and the Ph.D. degree from Iowa State University in 2011, all in electrical engineering. From September 2011 to October 2018, he had been with Sungkyunkwan University, Qatar University, and Queen's University Belfast in various academic positions. Dr. Chun joined the City University of Hong Kong in November 2018 as an Assistant Professor in Wireless Communication in the Department of Electronic Engineering. His research interests are primarily in the area of wireless communications with an emphasis on stochastic geometry, system-level network analysis, device-to-device communications, and various use-cases of 5G communications.
			\end{IEEEbiography}

\end{document}